\renewcommand{\tilde}{\widetilde}
\newtheorem{prop}{Proposition}[section]
\newtheorem{definition}[prop]{Definition}
\newtheorem{theorem}[prop]{Theorem}
\newtheorem{corollary}[prop]{Corollary}
\newtheorem{rem}{Remark}[section]
\newtheorem{example}[rem]{Example}
\renewcommand{\d}{\partial}
\newcommand{\RR}{\mathbb{R}}
\def\cC{\mathcal{C}}
\def\cF{\mathcal{F}}
\def\cK{\mathcal{K}}
\def\cL{\mathcal{L}}
\def\cM{\mathcal{M}}
\def\cP{\mathcal{P}}
\def\cR{\mathcal{R}}
\def\cW{\mathcal{W}}
\numberwithin{equation}{section} \makeatletter
\begin{document}

\def\mytitle{Canonical Structure of Field Theories with Boundaries \\
and Applications to Gauge Theories}

\pagestyle{myheadings} \markboth{\textsc{\small Troessaert}}{%
  \textsc{\small Hamiltonian Field Theories with Boundaries}} \addtolength{\headsep}{4pt}

\begin{centering}

  \vspace{1cm}

  \textbf{\Large{\mytitle}}



  \vspace{1.5cm}

  {\large C\'edric Troessaert$^a$}

\vspace{.5cm}

\begin{minipage}{.9\textwidth}\small \it \begin{center}
   Centro de Estudios Cient\'ificos (CECs)\\ Arturo Prat 514,
   Valdivia, Chile \\ troessaert@cecs.cl \end{center}
\end{minipage}

\end{centering}

\vspace{1cm}

\begin{center}
  \begin{minipage}{.9\textwidth}
    \textsc{Abstract}. In this paper, we present a review of the
    canonical structure of field theories defined on manifolds with
    time-like boundaries. The notion of differentiable generator is
    shown to be a requirement coming from the consistency of the
    symplectic structure. We show how this structure can be
    applied to classify the possible boundary conditions of a general gauge
    theory. We then review the definition and properties of surface
    charges. We show how the notion of
    differentiable generators allows the direct computation of the
    phase-space of boundary gauge degrees of freedom.
  \end{minipage}
\end{center}

\vfill

\noindent
\mbox{}
{\scriptsize$^a$Laurent Houart postdoctoral fellow.}

\thispagestyle{empty}
\newpage

\begin{small}
{\addtolength{\parskip}{-1.5pt}
 \tableofcontents}
\end{small}
\newpage

\section{Introduction}
\label{sec:introduction}

The usual way one develops the lagrangian and hamiltonian formulation for field
theories is by taking the continuous limit from a discrete system 
\cite{Goldstein2001}. This allows for a local definition of
field theories. For most applications we can neglect boundary
contributions and this structure is sufficient: either we work on manifolds without boundary or we impose
asymptotic conditions strong enough to put all boundary contributions
to zero. Unfortunately, those setups are too restrictive for a lot of
physically relevant problems. In general, we need to relax the
asymptotic behavior and deal with the boundary terms. 

\vspace{5mm}

The situation is easily solved in the Lagrangian framework: one
chooses boundary conditions and adds a corresponding boundary term to
the Lagrangian in order to make it well-defined. A well-defined
Lagrangian $L$ is such that, for any allowed variation of
the fields $\delta \phi^a$, the variation of $L$ does not produce any
boundary term: 
\begin{equation}
\delta \int_\cM \, L \, d^{n+1}x = \int_\cM \, \frac{\delta L}{\delta \phi^a}
\delta \phi^a \, d^{n+1}x .
\end{equation}
This well-defined action is the one needed in the path integral
\cite{Gibbons1977}.

In the hamiltonian picture, it seems that we don't need any
modification as the fundamental object, the poisson bracket, is
independent of total derivatives. In the same way that
hamiltonian generators are defined up to a constant in discrete
classical mechanics, they are usually defined up to a boundary term in field
theories. Unfortunately, this setup does not always work. 

A famous problem in that context is the definition of mass in general
relativity. The main issue is that the hamiltonian density is given by
a sum of constraints and is zero on all solutions of the equations of
motion. The answer is to add to the Hamiltonian a well chosen boundary
term \cite{Arnowitt2008}: this doesn't change the equations of motion and gives the expected
value for the energy. However, this construction is ad-hoc and is not a
solution to the problem as the formalism does not constrain this
boundary term. 
This phenomenon is general and appears whenever one wants to define conserved quantities
related to gauge symmetries. The generators of gauge symmetries are
constraints and always give zero when evaluated on solutions. The
conserved quantities are then given by specific boundary terms known as
surface charges: one
example is the electric charge given by the flux of the electric field
through the boundary. As for gravity, these boundary terms are ad-hoc
and not constrained by the formalism.

The final solution was proposed by Regge-Teitelboim in
\cite{Regge1974} and was later refined in
\cite{Benguria1977,Henneaux1985,Brown1986,Brown1986a}. Using
the definition of well-defined action explained above and applying it
to the hamiltonian action fix the form of the boundary
term in the definition of the Hamiltonian. This leads to the notion of
differentiable generator. Restricting the set of functionals to the set
of differentiable functionals fixes the boundary term for any hamiltonian
generator. In particular it fixes the boundary term for gauge
symmetries and gives a systematic definition for the algebra of
asymptotic symmetries and their associated surface charges. 

\vspace{5mm}

In the last 15 years, this technique and its lagrangian counterpart
\cite{Barnich2002,Barnich2003,Barnich2008} have been very useful in the
study of holography and of the $AdS/CFT$ conjecture. The conjecture relates a bulk
gravity theory to a field theory without gravity living in one
dimension lower. The two theories being equivalent, they share the same
symmetry algebra. In particular, the asymptotic symmetries of the
bulk gravity theory are global symmetries of the dual theory. The most
famous exemple in that context is maybe the original computation by
Brown-Henneaux in \cite{Brown1986}. They showed that the asymptotic
symmetry algebra of the asymptotically $AdS_3$ space-times is given by
two copies of the Virasoro algebra with central charges given by
$c^\pm=\frac{3l}{2G}$. This proves that this theory is described by a
conformal field theory in two dimensions. Recently, it has played a major role in the study of higher spin
theories in 3 dimensions and their holographic duals. In \cite{Campoleoni2010,Henneaux2010,Gaberdiel2011}, it was
shown that higher spin theories have asymptotic symmetry algebras
given by $\cW$-algebras. This provides good indications in favor of
the conjecture that higher spin theories on $AdS_3$ are duals to
certains minimal cosets models (see \cite{Gaberdiel2012} for a review).

\vspace{5mm}

In this paper, we present a constructive introduction to the
notion of differentiable functional. We show how this structure can
be applied to classify the possible boundary conditions of a general
gauge theory. We then review the definition and properties of surface
charges associated to asymptotic gauge symmetries. In the last part of
the paper, we show how the same notion of 
differentiable generator allows the direct computation of the reduced
phase-space of some topological theories. The appendix contains the
definitions and conventions used to describe the differential
structure of the phase-space of field theories.

The plan of the paper is the following:
\begin{itemize}
\item In section \ref{sec:CF-FieldTheories}, we present the canonical
  structure of field theories defined on a manifold with
  boundary. Requiring field theories to behave like discrete
  mechanical systems naturally introduces the notion of
  differentiable functional. We then show how this structure is
  related to the notion of well-defined action and we end with a review
  of Noether's theorem.

\item Section \ref{sec:surface-charges} is devoted to gauge
  theories. We start by describing the set-up and the requirements for
  consistent boundary conditions. We then introduce the notion of
  differentiable gauge generator and use them to classify the
  possible boundary conditions on the lagrange multipliers. The last
  part contains a review of the definition of the asymptotic symmetry
  algebra and the associated surface charges.

\item In section \ref{sec:reduced-phase-space}, we show how
  the notion of differentiable gauge generator allows the computation
  of the phase-space of boundary gauge degrees of freedom  of some
  topological theories without the 
  need to solve the constraints. We also
  present how one can make a complete classification of the possible
  boundary conditions. The technique is presented using Chern-Simons
  in three dimensions and BF theory in four dimensions as examples. 
\end{itemize}

\newpage

\section{Canonical Structure for Field Theories}
\label{sec:CF-FieldTheories}

We explained in the introduction that one has to be careful with the
boundary terms when studying field theories on manifolds with boundaries. The
problem is even deeper: in presence of
time-like boundaries, the usual Poisson bracket does not
satisfy Jacobi's identity. It means that the canonical structure is
not well-defined.

Let's consider a simple example: 3D
Chern-Simons theory on a cylinder $\RR \times  D$ with standard
coordinates $x^\mu= (t, r, \phi)$ the time-like boundary being given
by $r=R$. The action is given by 
\begin{equation}
S[A^a_\mu] = \frac{-\kappa}{2\pi} \int dt \int_D d^2x \,
\frac{1}{2} \epsilon^{ij}g_{ab} \left(A^a_i \dot A^b_j - A^a_0 F^b_{ij} \right),
\end{equation}
where $\epsilon^{12}=1$ and the metric $g_{ab}$ is a symmetric
non-degenerate invariant tensor on the Lie algebra $\mathfrak g$. We
use the field strength $F^a_{ij}=\d_i A^a_j - \d_j A^a_i +
f^a_{bc}A^b_iA^c_j$ with $f^a_{bc}$ being the structure constants of $\mathfrak g$.
If we impose the boundary condition $A^a_0\vert_{\d D}=0$, the action
is well-defined and the lagrangian picture makes sense \cite{Moore1989,Elitzur1989}. Let's now
compute Jacobi's identity for the following gauge generators: 
\begin{gather}
I = \frac{-\kappa}{4\pi} \int_D d^2x \, \rho A^a_r
\epsilon^{ij}g_{ab}F^b_{ij}, \qquad  J =\frac{-\kappa}{4\pi} \int_D
d^2x \, \eta^a \epsilon^{ij}g_{ab}F^b_{ij},\\ K=\frac{-\kappa}{4\pi} \int_D d^2x \, \xi^a \epsilon^{ij}g_{ab}F^b_{ij},
\end{gather}
where $\eta^a, \xi^a, \rho$ are independent of the dynamical fields
and $\rho=0$ in a neighborhood of the origin. For simplicity, let's assume that
$\eta^a\vert_{\partial D}=0$ and $\rho=1$ on a neighborhood of
$\partial D$. A straightforward computation gives:
\begin{equation}
\left\{ I, \left\{ J, K\right\}\right\} + \left\{ J, \left\{ K,
    I\right\}\right\} + \left\{ K, \left\{ I, J\right\}\right\} \approx
\frac{\kappa}{2\pi} \oint_{\partial D} d\phi \, \partial_r \eta^a
g_{ab} D_\phi \xi^b.   
\end{equation}
which is non-zero in general. The covariant derivative is defined by
$D_i \xi^a=\d_i \xi^a + f^a_{bc}A^b_i\xi^c$ and we used the symbol $\approx$ to denote equality on
the constraints surface.

The notion of differentiable functional introduced by Regge and
Teitelboim in \cite{Regge1974} solves this problem and allows for a
good definition of the canonical structure. The idea is to restrict the set of allowed functionals to the set of
differentiable functionals. In the above example, the functional $I$
is not differentiable and should not be allowed in the Poisson
bracket. More general definitions of the canonical
structure in presence of a boundary have been developed in
\cite{Lewis1986,Soloviev1993,Soloviev1995}, but they add non-trivial dynamics on the
boundary and will not be needed for our description.

\vspace{5mm}

In this section, we will present a constructive introduction to
the Regge-Teitelboim idea and its link to the Lagrangian framework. We
will start with the description of the symplectic structure for field
theories and introduce the idea of differentiable functionals. We will
then make the link with the lagrangian notion of well-defined
action. In the last part, we will show that Noether's theorem
associates a differentiable generator to any symmetry of the action.

The main point of this construction is that, using the
notion of differentiable generators, 
the hamiltonian structure of field theories behaves exactly as the one of
discrete mechanical systems.

\subsection{Symplectic Structure and Poisson Bracket}
\label{sec:Symplectic-struct}

The notion of symplectic manifold can be taken as the starting point of the hamiltonian theory
\cite{Arnold1989}:
\begin{definition}
Let $\cM$ be an even-dimensional differentiable manifold. A
symplectic structure on $\cM$ is a closed non-degenerate
differential 2-form $\omega$ on $\cM$:
\begin{equation}
d \omega = 0 \qquad \text{and} \quad \forall \xi \ne 0, \exists \eta:
\omega(\xi, \eta) \ne 0, \, \xi, \eta \in T_x\cM.
\end{equation}
The pair $(\cM, \omega)$ is called a symplectic manifold.
\end{definition}
\noindent If we have Darboux
coordinates on $\cM$, the symplectic structure takes the form:
\begin{equation}
\omega = dp_i dq^i
\end{equation}
where the $q^i$ describe the position of the system and the $p_i$ are the
associated momenta.

\vspace{5mm}

For field theories, the equivalent of $\cM$ is the set of allowed
configurations of the fields $z^A$ that we will denote $\cF(\Sigma) =
\left\{z^A(x), x^i\in \Sigma; \chi^\mu(z)\vert_{\d \Sigma} = 0
\right\}$. The manifold $\Sigma$ describes constant time
slices of the space-time under consideration. The conditions $\chi^\mu(z)\vert_{\d \Sigma} = 0$
are the set of boundary condition. We will 
assume in the following that they
are imposed in all equalities.

A differential 2-form on $\cF(\Sigma)$ is a functional 2-form
(see appendix~\ref{sec:phase-space}). We will restrict ourselves to the
most simple case where we have Darboux coordinates for the fields and
we will assume that the symplectic structure is given by:
\begin{equation}
\label{eq:SymplStruct}
\Omega = \int_\Sigma \frac{1}{2} \sigma_{AB} \delta z^A \delta z^B d^nx,
\end{equation}
where $\sigma_{AB}$ is a non degenerate constant antisymmetric
matrix whose inverse will be denoted $\sigma^{AB}$. The results we
will present in this paper apply only to this 
case. 
\begin{example} 
$\quad$

\begin{itemize}
\item Electromagnetism: the phase-space can be
  parametrized by
\begin{equation}
z^A = (A_i, E^i)
\end{equation}
where $A_i$ is the potential vector and $E^i$ is the electric
field. The symplectic structure is then given by
\begin{equation}
\Omega = \int_\Sigma -\delta E^i \delta A_i \, d^nx
\end{equation}
\item Gravity: in this case, we can use the spacial metric $g_{ij}$
  and its conjugate momentum $\pi^{ij}$ to describe the phase-space:
\begin{equation}
z^A = (g_{ij}, \pi^{ij})
\end{equation}
and
\begin{equation}
\Omega = \int_\Sigma \delta \pi^{ij} \delta g_{ij} \, d^nx.
\end{equation}
\end{itemize}
\end{example}

\vspace{5mm}

The symplectic structure of a manifold defines an isomorphism between
1-forms and vector fields. In the field-theoretic case, the 1-forms are
functional 1-forms and the vector fields are evolutionary vector
fields preserving the boundary conditions. From an allowed
evolutionary vector field $Q^A \frac{\d}{\d z^A}$ we can build a
functional 1-form $\Theta_Q$ using the symplectic structure
\eqref{eq:SymplStruct}:
\begin{equation}
\label{eq:vect2form}
\Theta_Q = \iota_Q \Omega = \int_\Sigma \sigma_{AB} Q^A \delta z^B d^nx.
\end{equation}
Due to the particular form of $\Theta_Q$ and the restrictions on
$Q^A$, the image of this application is never the full set of functional
1-forms.

\begin{definition}
A differential 1-form is a functional 1-form of the form
\begin{equation}
\Theta  = \int_\Sigma \theta_A \, \delta z^A d^nx,
\end{equation}
such that the evolutionary vector field $\sigma^{AB}\theta_B \frac{\d}{\d z^A}$ preserves
the boundary conditions.
\end{definition}
\noindent Any functional 1-form can be put into this form up to boundary terms. The key point
is that these boundary terms must be zero using the boundary
conditions. The application \eqref{eq:vect2form} defines an
isomorphism between the differential 1-forms and the evolutionary
vector fields preserving the boundary conditions. We will denote by
$J$ the inverse of this isomorphism.  

\vspace{5mm}

The fact that we needed to restrict the set of functional 1-forms in
order to have an isomorphism with the evolutionary vector fields means
that we will not be able to associate a hamiltonian vector field to
all functionals. Only functionals for which the differential $\delta$
gives a differential 1-form will generate a hamiltonian
transformation. This leads to the following definition:
\begin{definition}
A functional $G =\int_\Sigma g \,d^nx$ is called differentiable if its 
differential $\delta G$ is a differential 1-form:
\begin{equation}
\label{eq:propdiff}
\delta G = \int_\Sigma \, \frac{\delta g}{\delta z^A} \delta z^A d^nx \quad
\Leftrightarrow \quad \oint_{\d \Sigma} I^n ( g d^nx)=0
\end{equation}
and the evolutionary vector field $\sigma^{AB}\frac{\delta g}{\delta z^B}\frac{\d}{\d z^A}$ preserves
the boundary conditions.
\end{definition}
\noindent The
property \eqref{eq:propdiff} can also be written in term of
evolutionary vector fields by
asking that for all variations $\delta_Q$ preserving the boundary
conditions, we have
\begin{equation}
\label{eq:propdiffII}
\delta_Q G = \int_\Sigma \, \frac{\delta g}{\delta z^A} Q^A d^nx \quad
\Leftrightarrow \quad \oint_{\d \Sigma} I^n_{Q} ( g d^nx)=0.
\end{equation} 
This definition of differentiable functional is the one
introduced in \cite{Regge1974}, but we see that 
it comes naturally from the analysis of the symplectic structure. 
\begin{definition}
An evolutionary vector field $Q^A \frac{\d}{\d z^A}$ is called
hamiltonian if there exists a differentiable functional $G =
\int_\Sigma g \,d^nx$ such that
\begin{equation}
Q^A \frac{\d}{\d z^A} = J \delta G \qquad \Leftrightarrow \qquad Q^A =
\sigma^{AB} \frac{\delta g}{\delta z^B}.
\end{equation}
The functional $G$ is the generator of $Q^A \frac{\d}{\d z^A}$.
\end{definition}

\vspace{5mm}

Using these definitions, field theories behave in exactly the same
way as standard mechanical systems. We will now derive some of the
most important hamiltonian results that we will need later.
\begin{prop}
\label{pr:uptoaconst}
Let the phase-space $\cF(\Sigma)$ be path-connected. If two differentiable functionals $G_1$ and $G_2$ generate the same
hamiltonian vector field, then they differ only by a constant.
\end{prop}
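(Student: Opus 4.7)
The plan is to show that the functional $H := G_1 - G_2$ has vanishing differential $\delta H$ on $\cF(\Sigma)$, and then to use path-connectedness to conclude that $H$ is constant.

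First, I would unpack what it means for $G_1$ and $G_2$ to generate the same hamiltonian vector field. By the definition of a generator, there is an evolutionary vector field $Q^A \frac{\d}{\d z^A}$ preserving the boundary conditions such that
\[
Q^A = \sigma^{AB} \frac{\delta g_1}{\delta z^B} = \sigma^{AB} \frac{\delta g_2}{\delta z^B}.
\]
Since $\sigma^{AB}$ is non-degenerate, the two functional derivatives must coincide pointwise. Because $G_1$ and $G_2$ are both differentiable, their variations are given purely by bulk integrals against $\frac{\delta g_i}{\delta z^A}$ with no residual boundary contribution, so writing $h := g_1 - g_2$ we obtain $\delta H = \int_\Sigma \frac{\delta h}{\delta z^A} \delta z^A \, d^nx = 0$ as a functional 1-form on $\cF(\Sigma)$.

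The second step is the path-connectedness argument. Given any two configurations $z_{(0)}, z_{(1)} \in \cF(\Sigma)$, I would choose a smooth path $z(t)$, $t \in [0,1]$, lying inside $\cF(\Sigma)$, so that the boundary conditions $\chi^\mu(z(t))|_{\d\Sigma} = 0$ hold for every $t$. The tangent vector $Q^A(t) = \frac{d z^A(t)}{dt}$ is then an evolutionary vector field preserving the boundary conditions, and formula \eqref{eq:propdiffII} applied to $H$ along this $Q$ gives
\[
\frac{d}{dt} H(z(t)) = \int_\Sigma \frac{\delta h}{\delta z^A} \, Q^A(t) \, d^nx = 0.
\]
Integrating from $0$ to $1$ yields $H(z_{(1)}) = H(z_{(0)})$, and since the endpoints were arbitrary, $H$ is constant on $\cF(\Sigma)$.

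The main subtlety, and the only place where the differentiability hypothesis really bites, lies in the first step: without it, the identification of $\delta G_1$ with $\delta G_2$ as functional 1-forms could be polluted by surface terms at $\d\Sigma$, and one could not pass to pointwise equality of the functional derivatives. Apart from this, the argument is the standard field-theoretic analogue of \emph{a closed $1$-form on a connected manifold has a constant primitive}, with differentiability together with the shared generator playing the role of \emph{closed}.
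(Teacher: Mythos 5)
Your proof is correct and follows essentially the same route as the paper's: equality of the Euler--Lagrange derivatives, differentiability of both functionals forcing $\delta(G_1-G_2)=0$ as a functional 1-form with no residual boundary piece, and path-connectedness to conclude constancy. You merely spell out the integration-along-a-path step that the paper leaves implicit, which is a faithful elaboration rather than a different argument.
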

\begin{proof}
We have
\begin{equation}
\frac{\delta g_1}{\delta z^A} = \frac{\delta g_2}{\delta z^A}.
\end{equation}
Because $G_1$ and $G_2$ are both differentiable, it imposes
$\delta (G_1-G_2)=0$. Due to the path-connectedness of the
phase-space, the functional $G_1-G_2$ is a constant.
\end{proof}
\noindent This property relies heavily on the notion of differentiable
functional. If we drop the differentiability condition and use $Q^A =
\sigma^{AB} \frac{\delta g}{\delta z^B}$ as the definition of the
evolutionary vector field associated to $G$ then a generator would be
defined only up to a boundary term.

The following definition and properties describe the Poisson bracket
induced on the differentiable functionals by the symplectic structure
$\Omega$.
\begin{definition}
The bracket of two differentiable functionals $F$ and $G$ is the
functional given by
\begin{equation}
\left\{ F, G\right\} = \iota_F \iota_G \Omega = \Omega(G^A, F^A)
\end{equation}
where $F^A$ and $G^A$ are the characteristics of the hamiltonian
vector fields associated to $F$ and $G$:
\begin{equation}
F^A = \sigma^{AB}\frac{\delta f}{\delta z^B}, \quad G^A = \sigma^{AB}\frac{\delta g}{\delta z^B}.
\end{equation}
\end{definition}
\noindent The bracket takes the simple form:
\begin{equation}
\left\{ F, G\right\} = \int_\Sigma \frac{\delta f }{\delta z^A}
\sigma^{AB} \frac{\delta g}{\delta z^B} d^nx.
\end{equation}
\begin{prop}
The variation of a differentiable functional $F$ along the hamiltonian
vector field generated by $G$ is given by
\begin{equation}
\delta_G F = \left\{ F, G \right\}.
\end{equation}
\end{prop}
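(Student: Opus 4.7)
The plan is to chain together three facts: the differentiability of $F$, the fact that the Hamiltonian vector field generated by $G$ is an allowed variation (which follows from the differentiability of $G$), and the explicit coordinate formula for the Poisson bracket that was just recorded above the statement.

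First, I would unpack $\delta_G F$. By definition, this is the derivative of $F$ along the Hamiltonian vector field $G^A \tfrac{\partial}{\partial z^A}$ with $G^A = \sigma^{AB}\tfrac{\delta g}{\delta z^B}$. To evaluate it as $\delta F$ contracted on this vector field without worrying about boundary terms, I need $G^A \tfrac{\partial}{\partial z^A}$ to be an admissible evolutionary vector field preserving the boundary conditions. This is exactly the content of the assumption that $G$ is a differentiable functional in the sense of the definition just above the proposition.

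Second, since $F$ is differentiable, I can apply the form \eqref{eq:propdiffII} with $Q^A = G^A$: the boundary integral $\oint_{\partial\Sigma} I^n_{G}(f\,d^n x)$ vanishes, and hence
\begin{equation*}
\delta_G F \;=\; \int_\Sigma \frac{\delta f}{\delta z^A}\, G^A \, d^n x
\;=\; \int_\Sigma \frac{\delta f}{\delta z^A}\,\sigma^{AB}\,\frac{\delta g}{\delta z^B}\, d^n x.
\end{equation*}
Third, I would simply recognize the right-hand side as the explicit coordinate expression for $\{F,G\}$ derived immediately after the definition of the bracket, finishing the proof.

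The only nontrivial point, and the one I would emphasize in the write-up, is the compatibility of the two differentiability hypotheses: differentiability of $G$ guarantees that $G^A\tfrac{\partial}{\partial z^A}$ is an allowed variation so that differentiability of $F$ can be invoked in the form \eqref{eq:propdiffII}. Everything else is algebraic bookkeeping with $\sigma^{AB}$, with no calculation beyond what has already appeared. There is no real obstacle; the content of the proposition is that the notion of differentiable functional has been engineered precisely so that the formal manipulation $\delta_G F = \iota_G \delta F = \iota_G \iota_{J\delta F}\Omega = \{F,G\}$ goes through without uncontrolled boundary contributions.
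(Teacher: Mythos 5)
Your argument is correct and is essentially the paper's own proof: expand $\delta_G F$ into a bulk term plus $\oint_{\d\Sigma} I^n_G(f\,d^nx)$, kill the boundary term by invoking \eqref{eq:propdiffII} together with the fact that differentiability of $G$ makes $G^A\frac{\d}{\d z^A}$ an allowed (boundary-condition-preserving) evolutionary vector field, and identify the remaining bulk integral with the coordinate formula for $\{F,G\}$. No gaps; the emphasis you place on the interplay of the two differentiability hypotheses is exactly the point the paper's proof makes.
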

\begin{proof}
We have
\begin{eqnarray}
\delta_G F &=& \int_\Sigma \frac{\delta f}{\delta z^A} G^A d^nx +
\oint_{\d \Sigma} I^n_{G}(f d^nx) \nonumber \\
&=& \int_\Sigma \frac{\delta f}{\delta z^A} G^A d^nx = \int_\Sigma \frac{\delta f }{\delta z^A}
\sigma^{AB} \frac{\delta g}{\delta z^B} d^nx.
\end{eqnarray}
The boundary term is zero because the vector field $G^A \frac{\d}{\d z^A}$ preserves the
boundary conditions and $F$ satisfies \eqref{eq:propdiffII}.
\end{proof}
\begin{prop}
The bracket $\left\{ F, G\right\}$ defines a Poisson bracket on the
set of differentiable functionals:
\begin{itemize}
\item $\left\{ F, G\right\}$ is a differentiable functional
\item $\left\{ F, G\right\} = -\left\{ G, F\right\}$
\item $\left\{ \left\{ F, G\right\}, H\right\} + \left\{ \left\{ G,
      H\right\}, F\right\} + \left\{ \left\{ H, F\right\}, G\right\} = 0$
\end{itemize}
where $F, G$ and $H$ are differentiable functionals.
\end{prop}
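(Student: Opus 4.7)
The plan is to treat the three items in turn, with the key underlying fact being that the hamiltonian vector field of $\{F,G\}$ coincides (up to a standard sign) with the Lie bracket $[X_F, X_G]$ of the hamiltonian vector fields associated to $F$ and $G$. Antisymmetry is then immediate from $\sigma^{AB} = -\sigma^{BA}$ applied to $\{F,G\} = \int_\Sigma \frac{\delta f}{\delta z^A} \sigma^{AB} \frac{\delta g}{\delta z^B} d^nx$.

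For differentiability of $\{F,G\}$, I would first establish that its hamiltonian vector field is $\pm[X_F, X_G]$. Since $\sigma_{AB}$ is constant, $\delta \Omega = 0$, and the Cartan-style identity $\cL_{X_F} \Omega = \delta\, \iota_{X_F} \Omega + \iota_{X_F} \delta \Omega = 0$ carries over to the functional setting; contracting with $X_G$ and rearranging yields $\iota_{[X_F, X_G]} \Omega = \pm\, \delta \{F, G\}$. The commutator $[X_F, X_G]$ preserves the boundary conditions $\chi^\mu(z)|_{\d \Sigma} = 0$ because $X_F$ and $X_G$ individually do, which gives half of the differentiability condition. The other half --- that $\delta \{F,G\}$ is a differential 1-form with no residual boundary piece --- follows by testing against a hamiltonian variation $\delta_H$: the previous proposition gives $\delta_H \{F,G\} = \delta_H \delta_G F$, and this is manifestly a bulk integral because $F$ is differentiable and both $X_G, X_H$ preserve the boundary conditions.

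Jacobi then reduces to the Jacobi identity for the Lie bracket of vector fields. Using $\delta_G F = \{F, G\}$ from the previous proposition --- now legitimately applicable to $\{F,G\}$ itself, thanks to the step above --- I rewrite each term of the cyclic sum as $\{\{F, G\}, H\} = \delta_H \delta_G F$, and similarly for the two cyclic permutations. Grouping the resulting six terms into three commutators via $\delta_G \delta_H - \delta_H \delta_G = \delta_{[X_G, X_H]}$ and then invoking $[X_G, X_H] = \pm X_{\{G, H\}}$ converts the sum back into a cyclic sum of double brackets with a controlled sign, which vanishes identically by the Lie-algebra Jacobi identity for evolutionary vector fields on the phase-space.

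The main obstacle is the functional-analytic content of the second paragraph: verifying that $X_{\{F,G\}} = \pm[X_F, X_G]$ holds in our setting without unwanted boundary contributions, and in particular that the closedness $\delta \Omega = 0$ translates faithfully into the expected Cartan identities when the relevant vector fields act on integrals over $\Sigma$ with nontrivial $\d \Sigma$. Once that identification is in hand, antisymmetry is trivial and Jacobi follows automatically from the Lie-algebra structure of vector fields.
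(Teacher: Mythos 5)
Your overall strategy coincides with the paper's: identify the hamiltonian vector field of $\left\{F,G\right\}$ with $-[X_F,X_G]$, observe that the Lie bracket of two evolutionary vector fields preserving the boundary conditions again preserves them, and let antisymmetry and Jacobi follow from the symplectic form and the Lie-algebra structure. The difficulty is that the one step you yourself flag as ``the main obstacle'' --- showing that $\delta\left\{F,G\right\}$ carries no boundary term --- is precisely the step that needs an argument, and the argument you sketch does not close it. Writing $\delta_H\left\{F,G\right\}=\delta_H\delta_G F$ and calling the result ``manifestly a bulk integral'' conflates two different things: of course $\delta_H\delta_G F$ is an integral over $\Sigma$, but differentiability requires that this variation equal the Euler--Lagrange pairing $\int_\Sigma\frac{\delta}{\delta z^A}\bigl(\cdot\bigr)H^A\,d^nx$ with no boundary remainder left over from the integrations by parts, and that is exactly what must be proved. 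Moreover, testing only against hamiltonian variations $\delta_H$ is weaker than what the definition \eqref{eq:propdiffII} demands, namely vanishing of the boundary term for \emph{every} variation preserving the boundary conditions, not just those generated by differentiable functionals.

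The paper supplies the missing ingredient as a concrete local identity,
\begin{equation*}
\delta\Bigl(\frac{\delta f}{\delta z^A}G^A\Bigr)d^nx=\delta z^A\,\delta_G\frac{\delta f}{\delta z^A}\,d^nx+d\Bigl(\delta I^n_G(f\,d^nx)-\delta_G I^n(f\,d^nx)\Bigr),
\end{equation*}
and then kills the exact term on $\d\Sigma$ using the lemma of appendix \ref{sec:boundary-conditions}: differentiability of $F$ means $I^n(f\,d^nx)$ vanishes on the boundary, hence so do $\delta I^n_G(f\,d^nx)=\delta\,\iota_G I^n(f\,d^nx)$ and $\delta_G I^n(f\,d^nx)$, because $G^A$ preserves the boundary conditions. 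This computation is the functional-theoretic substitute for the Cartan identity you invoke; without it (or an equivalent explicit control of the boundary terms) the identification $\left\{F,G\right\}^A=-[F,G]^A$, and with it your reduction of Jacobi to $[\delta_G,\delta_H]=\delta_{[X_G,X_H]}$, remain unestablished. Your antisymmetry argument is fine, and your Jacobi argument is correct once the differentiability step is supplied.
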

\begin{proof}
We will only prove the first condition as the other two can be proved
easily following the discrete case. This proof has first been
done in \cite{Brown1986a}. We have to prove that $\delta\left\{ F,
  G\right\}$ does not contain boundary terms and that its associated
vector field preserves the boundary conditions.

One can prove that
\begin{equation}
\delta \frac{\delta f}{\delta z^A} G^A d^nx = \delta z^A \delta_G
\frac{\delta f}{\delta z^A} d^n x + d \left( \delta I^n_G (f d^n x) -
  \delta_G I^n (f d^nx)\right).
\end{equation}
Integrating over $\Sigma$, we obtain
\begin{equation}
\int_\Sigma \delta \frac{\delta f}{\delta z^A} G^A d^nx = \int_\Sigma \delta z^A \delta_G
\frac{\delta f}{\delta z^A} d^n x.
\end{equation}
The boundary terms disappear because $F$ is a differentiable
functional and $G^A \frac{\d}{\d z^A}$ preserves the boundary conditions (see
appendix \ref{sec:boundary-conditions}).
The differential of $\left\{ F, G\right\}$ is then easily computed
\begin{eqnarray}
\delta \left\{ F, G\right\} & = & \int_\Sigma \left( \delta \frac{\delta
    f}{\delta z^A} G^A - \delta \frac{\delta g}{\delta z^A} F^A\right)
d^nx \nonumber \\
&=& \int_\Sigma \delta z^A \left(\delta_G
\frac{\delta f}{\delta z^A} -\delta_F
\frac{\delta g}{\delta z^A} \right)d^n x,
\end{eqnarray}
which does not contain any boundary term. The characteristics of the
hamiltonian vector field generated by $\left\{ F, G\right\}$ can be
read off from the above equation:
\begin{eqnarray}
\left\{ F, G\right\}^A &= & \sigma^{AB}\left(\delta_G
\frac{\delta f}{\delta z^B} -\delta_F
\frac{\delta g}{\delta z^B} \right) \nonumber \\
&=& \delta_G F^A - \delta_F G^A \nonumber \\
&=& -\left[F, G \right]^A,
\end{eqnarray}
where the last expression is the Lie bracket of the two hamiltonian
vector fields $F^A \frac{\d}{\d z^A}$ and $G^A \frac{\d}{\d
  z^A}$. The Lie bracket of two evolutionary vector fields preserving
the boundary conditions preserves the boundary conditions which implies
that $\left\{ F, G \right\}$ is a differentiable functional.
\end{proof}

\begin{corollary}
The application sending a differentiable generator onto its associated
hamiltonian vector field is a homomorphism of Lie algebras.
\end{corollary}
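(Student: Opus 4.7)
The plan is to observe that the preceding proposition already contains essentially all the content of this corollary, so the task reduces to repackaging what was shown there into the language of Lie algebras. First I would verify that the assignment $\Phi: G \mapsto G^A \frac{\d}{\d z^A}$ with $G^A = \sigma^{AB} \frac{\delta g}{\delta z^B}$ is a well-defined $\RR$-linear map from the space of differentiable functionals (viewed modulo constants, by Proposition~\ref{pr:uptoaconst}) to the space of hamiltonian vector fields. Linearity is immediate because $\delta$ is linear and $\sigma^{AB}$ is a constant tensor; the quotient by constants is needed to make $\Phi$ injective, since constants generate the zero vector field.

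Next I would simply invoke the identity established in the last displayed equation of the previous proposition, namely
\begin{equation}
\{F,G\}^A = \delta_G F^A - \delta_F G^A = -[F,G]^A,
\end{equation}
where $[F,G]^A$ denotes the characteristic of the Lie bracket of the evolutionary vector fields $F^A \frac{\d}{\d z^A}$ and $G^A \frac{\d}{\d z^A}$. Reading this equation through $\Phi$ gives the compatibility of $\Phi$ with the bracket structures on both sides; combined with linearity, this is exactly the statement that $\Phi$ is a Lie algebra homomorphism (once the standard sign convention is fixed on one side so that the overall minus is absorbed, as in the discrete mechanical case).

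The only remaining point to check is that the target really is a Lie algebra, i.e.\ that the Lie bracket of two hamiltonian vector fields is again hamiltonian. This however is also supplied by the previous proposition, whose proof exhibits a concrete generator for $[X_F, X_G]$, namely $-\{F,G\}$, which is differentiable by the same result.

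The main ``obstacle'' is really only bookkeeping: one must be careful with the sign convention relating $\{\cdot,\cdot\}$ and $[\cdot,\cdot]$, and one must remember to quotient by constants on the source side so that $\Phi$ descends to a genuine map of Lie algebras rather than merely a Lie algebra morphism with nontrivial kernel. No new computation is required beyond what Proposition~\ref{pr:uptoaconst} and the Poisson bracket proposition have already provided.
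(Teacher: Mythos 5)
Your argument is correct and takes essentially the same route as the paper, which proves this corollary implicitly by the identity $\left\{F,G\right\}^A = -\left[F,G\right]^A$ established at the end of the proof of the preceding proposition, together with linearity and the fact that the bracket of two hamiltonian vector fields is again hamiltonian (generated by the differentiable functional $\left\{F,G\right\}$). One minor remark: the quotient by constants is not needed for the statement --- a Lie algebra homomorphism is allowed to have a nontrivial kernel, and the corollary claims only a homomorphism, not an injection.
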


It is important to keep in mind that only differentiable
functionals can enter the Poisson bracket. For all purposes, functionals
that are not differentiable don't exist in the hamiltonian framework. This fact has a lot of
consequences. It is, for instance, the property used to solve the problem of
charges in gauge theories. We will also use it in section \ref{sec:reduced-phase-space} to build
functionals in order to probe the reduced phase-space of theories with
no local degrees of freedom.

\subsection{Well defined Actions}
\label{sec:well-defined-action}

We saw in the previous section that the notion of differentiable
generator is a key point of the canonical formalism for field theories
defined on a manifold with boundary. This condition can be
reinterpreted as follows: 
if $G$ is a differential functional then the hamiltonian action
generating the evolution along the associated hamiltonian vector field
is well defined
\begin{equation}
S_G[z^A] = \int ds \left(\int_\Sigma d^nx \frac{1}{2} \sigma_{AB}z^A
  \d_s z^B - G[z^A]\right).
\end{equation}
We use ``well defined'' in the sense that the variation of the action
$S_G$ will not generate any boundary term on $\d \Sigma$. The
condition that $G^A \frac{\d}{\d z^A}$ preserves the boundary
conditions is equivalent to the requirement that the evolution along
the parameter $s$ stays inside the allowed configurations. Those are of
course important properties in the case where the differentiable
functional is the Hamiltonian of the theory: $H[z^A]$.

\vspace{5mm}

The canonical structure and Hamiltonian of a theory are usually deduced
from the Lagrangian description of the theory. One expects that a well
defined Lagrangian action will lead to a differentiable
Hamiltonian. This is indeed the case.

Let's assume that we have a set of boundary conditions for the
dynamical fields $\phi^a$ and a well-defined Lagrangian $\cL$ on $\cM = \RR
\times \Sigma$. As we saw in the introduction, 
the differentiation of a well-defined Lagrangian does not create any boundary term:
\begin{equation}
\delta S[\phi] = \int \int_\Sigma \left( \frac{\delta \cL}{\delta
    \phi^a} \delta \phi^a + \frac{\delta \cL}{\delta
    \dot \phi^a} \delta \dot\phi^a\right)d^nx\, dt.
\end{equation}
We have also assumed that $\cL$ does not depend on second or higher
time derivatives. The Euler-Lagrange derivatives are only defined on
$\Sigma$, they don't take into account the derivatives with respect to
$t$. We will restrict our analysis to boundary conditions on $\d \Sigma$ that are
independent of time. If this is not the case, the
phase-space is time-dependent and the canonical structure developed
in the previous section needs to be improved.

The momenta are defined as
\begin{equation}
\label{eq:deffmomenta}
\pi_a \equiv \frac{\delta \cL}{\delta\dot \phi^a}.
\end{equation}
If this relation can be inverted, we
can express $\dot \phi^a$ as local functions of $\pi_a$ and
$\phi^a$. The boundary conditions on $\phi^a$ imply boundary
conditions on $\pi_a$. The Hamiltonian is then defined as
\begin{equation}
\label{eq:defHamiltonian}
H[\phi^a, \pi_a] = \left.\int_\Sigma  \left( \pi_a \dot \phi^a -
  \cL\right)d^nx\,\right\vert_{\dot \phi^a = \dot \phi^a(\pi,\phi)}.
\end{equation}
The variation of $H$ can be easily computed:
\begin{eqnarray}
\delta H &=& \left.\int_\Sigma \left( \delta \pi_a \dot \phi^a + \pi_a \delta \dot
\phi^a - \delta \cL\right) d^nx \,\right\vert_{\dot \phi^a = \dot \phi^a(\pi,\phi)}\nonumber \\
 & =  & \left.\int_\Sigma \left( \delta \pi_a \dot \phi^a + \pi_a \delta \dot
\phi^a - \frac{\delta \cL}{\delta
    \phi^a} \delta \phi^a - \frac{\delta \cL}{\delta
    \dot \phi^a} \delta \dot\phi^a\right) d^nx \,\right\vert_{\dot \phi^a = \dot
  \phi^a(\pi,\phi)}\nonumber \\
& =  & \left.\int_\Sigma \left( \delta \pi_a \dot \phi^a - \frac{\delta \cL}{\delta
    \phi^a} \delta \phi^a\right) d^nx \,\right\vert_{\dot \phi^a = \dot
  \phi^a(\pi,\phi)}
\label{eq:diffhamil}
\end{eqnarray}
which does not contain any boundary term. The hamiltonian vector field
associated to $H$ is the time evolution. A consistent choice of
boundary conditions for $\cL$ requires these boundary conditions to be preserved by the time
evolution. The Hamiltonian $H$ is a differentiable generator by construction.

 If \eqref{eq:deffmomenta} is not invertible, we have to add primary
 constraints $\psi_\alpha = 0$ (see \cite{Henneaux1992}
for the details). The remarkable property of $H$ is
that it depends only on $\phi^a$ and $\pi_a$ even when the relation
\eqref{eq:deffmomenta} is not invertible. The analysis above is still
valid and $H[\pi,\phi]$ is again a differentiable functional.
The constraints $\psi_\alpha$ as such are not differentiable
functionals and can not enter the Poisson bracket. The solution is to
build the smeared quantities:
\begin{equation}
\label{eq:defgammaprimary}
\Gamma_\lambda[\phi^a, \pi_a] = \int_\Sigma\lambda^\alpha \psi_{\alpha} \,d^nx ,
\end{equation}
where the different possible functions $\lambda^\alpha(x)$ play the role of
labels. The differentiability of $\Gamma_\lambda$ will impose boundary conditions on
$\lambda^\alpha(x)$. The local constraints $\psi_\alpha=0$ are equivalent to the
requirement that $\Gamma_\lambda =0$ for any allowed function $\lambda^\alpha$. The
hamiltonian action is then given by: 
\begin{equation}
\label{eq:wdactiongauge}
S[\phi^a, \pi_a, \lambda^\alpha] = \int \left( \int_\Sigma \pi_a
  \dot \phi^a d^nx - H[\phi^a, \pi_a] - \Gamma_\lambda[\phi^a, \pi_a]
  \right)dt.
\end{equation}
The functions $\lambda^\alpha$ are the Lagrange multipliers
enforcing the constraints $\psi_\alpha=0$. Both $H$ and $\Gamma_\lambda$ are
differentiable generators as expected. 

The action
\eqref{eq:wdactiongauge} is not the end of the story and one can continue
the Dirac algorithm to build the full set of constraints of the
theory. The secondary constraints are obtained by requiring the
preservation in time of the primary constraints: $\frac{d}{dt}\Gamma_\rho=0$ for
all allowed $\rho$. This gives:
\begin{equation}
0 = \frac{d}{dt}\Gamma_\rho = \left\{\int_\Sigma \rho^\alpha \psi_{\alpha}\,d^nx , H[\phi^a, \pi_a] + \Gamma_\lambda[\phi^a, \pi_a]\right\}.
\end{equation}
This gives conditions on the Lagrange multipliers $\lambda^\alpha$ or
new constraints. If there are new constraints, this procedure has to
be continued to check their preservation in time by constructing the
associated differentiable smeared quantities.

\begin{example}
\label{ex:EM}
Electromagnetism in 4D.

We will work with $\Sigma$ a ball of finite radius in $\RR^3$. The metric on $\RR \times\Sigma$ is the flat
metric $\eta_{\mu\nu}$. The action is then given by:
\begin{equation}
S[A_\mu] = \int_{\RR\times\Sigma}  \frac{-1}{4}F_{\mu\nu}F^{\mu\nu}\,d^4x.
\end{equation}
It is well defined if $A_0$, $A_I$ are fixed on the
boundary where $x^I=(\theta, \phi)$ are coordinates on the sphere $\d \Sigma$ and
$(A_0, A_I)$ are the components of the pull-back of $A_\mu$ on the
boundary $\RR \times \d \Sigma$. We will take these boundary values
to be time independent. The momenta are given by:
\begin{equation}
\pi^0 = 0, \quad \pi^i = F^{i0}, \quad \pi_i = \d_0 A_i - \d_i A_0.
\end{equation}
We have one primary constraint $\psi = \pi^0$. The smeared quantity
\eqref{eq:defgammaprimary} becomes $\Gamma_\lambda = \int_\Sigma
 \lambda \pi^0\,d^3x$. Its associated hamiltonian vector field will
preserve the boundary conditions ($A_0$ fixed) only if $\lambda=0$ on
the boundary. The hamiltonian action is then
\begin{gather}
S[\pi^\mu, A_\mu, \lambda] = \int \int_\Sigma  \left(\pi^\mu
  \dot A_\mu - h -\lambda \pi^0 \right)d^3xdt,\\
h = \frac{1}{2}\pi^i\pi_i + \frac{1}{4}F_{ij}F^{ij} + \pi^i \d_i A_0.
\end{gather}
The boundary conditions are $\lambda=0$ and $A_0$, $A_I$ fixed on the
boundary. One can easily check that the Hamiltonian is a
differentiable generator.

The preservation of $\psi$ will imply a secondary constraint:
\begin{eqnarray}
0 &=& \left\{ \int_\Sigma \rho \pi^0 \,d^3x, H +
  \Gamma_\lambda\right\}\\
 &  = & \int_\Sigma \rho \, \d_i \pi^i \,d^3x.
\end{eqnarray}
This must be zero for all $\rho$ (with $\rho=0$ on the boundary) which
gives us the secondary constraint $\d_i\pi^i=0$. In this case, the Dirac algorithm
stops here and the total action can be written
\begin{gather}
S[\pi^\mu, A_\mu, \lambda_1,\lambda_2] = \int \int_\Sigma  \left(\pi^\mu
  \dot A_\mu - h -\lambda_1 \pi^0 - \lambda_2 \d_i \pi^i \right)d^3xdt,
\end{gather}
with $\lambda_1=0=\lambda_2$ on the boundary. This is not yet the
usual hamiltonian action. It is obtained by solving the constraint
$\pi^0=0$ and introducing the electric field $E^i=-\pi^i$:
\begin{gather}
S[E^i, A_\mu, \lambda_2] = \int \int_\Sigma  \left[-E^i
  \dot A_i - \left(\frac{1}{2}E^iE_i + \frac{1}{4}F_{ij}F^{ij}
  \right) + E^i \d_i \left(A_0 - \lambda_2\right) \right]d^3xdt.
\end{gather}
We can absorb $\lambda_2$ in $A_0$ which then takes the role of
lagrange multiplier for the Gauss constraint $\d_i E^i=0$. However,
only the bulk part of $A_0$ is a lagrange multiplier, its boundary
value is still fixed and contributes to the 
hamiltonian through the boundary
term:
\begin{equation}
H[z^A] \approx \int_\Sigma \left(\frac{1}{2}E^iE_i + \frac{1}{4}F_{ij}F^{ij}
  \right) d^3x - \oint_{\d \Sigma} A_0 E^r\, d\Omega^2,
\end{equation}
where we used spherical coordinates $(r, \theta, \phi)$ on $\Sigma$ with
$d\Omega^2$ the usual measure on the 2-sphere. The sign
$\approx$ denotes equality on the surface of the constraints.

In a similar way, in gravity, the lapse and shift become
lagrange multipliers after solving the primary constraints, but only
their bulk part are really arbitrary.
 Their boundary values contribute in a non-trivial way to the
Hamiltonian of the theory (see section \ref{sec:bound-cond-total}).
\end{example}

\subsection{Global Symmetries}
\label{sec:symmetries}

In the hamiltonian description of discrete systems, Noether's theorem relates symmetries of the Hamiltonian action and generators that
commute with the Hamiltonian of the theory. In this section, we will
study the same problem in field theories and its interaction with the
notion of differentiable generator. We will do this analysis without
constraints but it can be extended to gauge field theories (see
exercice 3.24 of \cite{Henneaux1992}).

\vspace{5mm}

Let's consider the following hamiltonian action:
\begin{equation}
S[z^A] = \int \left(\int_\Sigma \frac{1}{2} \sigma_{AB}z^A\dot z^B -
H[z^A]\right)\, d^nxdt,
\end{equation}
with a differentiable Hamiltonian. Let's assume that we have a
variation $\delta_Q$ preserving the boundary conditions, with $Q^A$ local functions of $z^A$ and
not of their time derivatives, such that
it also preserves the action:
\begin{equation}
\label{eq:conservationaction}
\delta_Q S = \int \int_\Sigma \, \frac{d}{d t} k\, d^nxdt, \qquad
\delta_Q z^A = Q^A.
\end{equation}
We only allowed conservation up to a total time derivative as we expect the
theory to behave exactly as a discrete mechanical system. We
will see later that this conditions is necessary (see theorem \ref{theo:symmetries-2}). Expanding the left
hand side, we obtain:  
\begin{equation}
\delta_Q S =\int \int_\Sigma \left( \frac{1}{2} \sigma_{AB}Q^A\dot z^B +
  \frac{1}{2} \sigma_{AB}z^A\frac{d}{dt} Q^B - \frac{\delta h}{\delta
    z^A} Q^A\right)\, d^nxdt.
\end{equation}
There is no boundary term coming from the variation $\delta_QH$ as $H$
is a differential functional. Let's introduce the functional:
\begin{equation}
G[z^A]=\int_\Sigma \left(\frac{1}{2} \sigma_{AB}z^AQ^B -
  k\right)\, d^nx, \quad g=\frac{1}{2} \sigma_{AB}z^AQ^B -
  k.
\end{equation}
We can rewrite condition \eqref{eq:conservationaction} as
\begin{equation}
\label{eq:conservI}
\frac{d}{dt} G = \int_\Sigma \left( - \sigma_{AB}Q^A\dot z^B + \frac{\delta h}{\delta
    z^A} Q^A\right)\, d^nx.
\end{equation}
This equality is valid for any values of the fields $z^A$ and any
values of their time derivative $\dot z^A$. We can expand the left
hand side as
\begin{eqnarray}
\frac{d}{dt} G & = & \int_\Sigma \left( \frac{\d g}{\d t} +
  \frac{\delta g}{\delta z^A} \dot z^A\right)\, d^nx + \oint_{\d \Sigma}
 I^n_{\dot z} (g d^nx)
\end{eqnarray}
where $\frac{\d }{\d t} $ is the partial derivative with respect to
$t$ and $I^n_{\dot z}$ is is the homotopy operator \eqref{eq:homotopQ} associated to
$\delta_{\dot z} z^A = \dot z^A$.
Because $\dot z^A$ is an arbitrary variation of $z^A$ and there is no
boundary term involving it in the right hand side of 
\eqref{eq:conservI}, it implies that the boundary
term in the above 
expression is zero. We recognize part of the differentiability of
the functional $G$:
\begin{equation}
\delta G = \int_\Sigma \frac{\delta g}{\delta z^A} \delta z^A\, d^nx,
\quad \Leftrightarrow \quad \oint_{\d\Sigma} I^n_{\delta z}(g d^nx).
\end{equation}
Locally, equation
\eqref{eq:conservI} gives also
\begin{eqnarray}
\frac{\delta g}{\delta z^A} & = & \sigma_{AB} Q^B,\\
\frac{\d g}{\d t} & = & \frac{\delta h}{\delta z^A} Q^A.
\end{eqnarray}
The first equation implies that $Q^A\frac{\d}{\d z^A}$ is a
hamiltonian vector field with generator $G$ and that $G$ is a differentiable functional as
$\delta_Q$ preserves the boundary conditions. The second equation gives
the conservation of $G$:
\begin{equation}
\frac{d}{dt} G = \frac{\d G}{\d t} + \left\{ G, H \right\} = 0.
\end{equation}

\begin{theorem}
If we have a variation $\delta_Q$ preserving the boundary conditions
and preserving the action in the sense \eqref{eq:conservationaction}
then there exists a differentiable generator $G$ such that
$Q^A\frac{\d}{\d z^A}$ is the associated hamiltonian vector field and satisfying
\begin{equation}
\label{eq:conservationG}
\frac{\d G}{\d t} + \left\{ G, H \right\} = 0.
\end{equation}
\end{theorem}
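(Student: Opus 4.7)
The plan is to follow the chain of deductions sketched in the discussion preceding the theorem, and to make precise how the invariance condition \eqref{eq:conservationaction} extracts three separate pieces of information at once: the definition of $G$, its differentiability, and its conservation.

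First I would substitute $\delta_Q z^A = Q^A$ into $\delta_Q S$. Because $H$ is differentiable by assumption, the variation $\delta_Q H$ produces no boundary contribution, so all boundary terms in $\delta_Q S$ come from varying the symplectic kinetic piece. A short rearrangement then recasts \eqref{eq:conservationaction} in the form
\begin{equation}
\frac{d}{dt} G = \int_\Sigma \left( -\sigma_{AB} Q^A \dot z^B + \frac{\delta h}{\delta z^A} Q^A \right) d^n x,
\end{equation}
with the natural candidate $G = \int_\Sigma g\, d^n x$ where $g = \half \sigma_{AB} z^A Q^B - k$. This fixes $G$ essentially uniquely (up to something that was already a total time derivative absorbed into $k$).

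Next I would expand the left-hand side using the chain rule for a local functional. Since $Q^A$ and $k$ depend on the $z^A$ but not on their time derivatives, one obtains
\begin{equation}
\frac{d}{dt} G = \int_\Sigma \left( \frac{\d g}{\d t} + \frac{\delta g}{\delta z^A} \dot z^A \right) d^n x + \oint_{\d \Sigma} I^n_{\dot z}(g\, d^n x).
\end{equation}
The identity must hold as an equation in the independent variables $z^A$ and $\dot z^A$, with no equations of motion imposed. Since the right-hand side of the rewritten conservation condition contains no boundary contribution involving $\dot z^A$, the boundary integral $\oint_{\d \Sigma} I^n_{\dot z}(g\, d^n x)$ must vanish for arbitrary $\dot z^A$ compatible with the boundary conditions. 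This is exactly the differentiability criterion \eqref{eq:propdiffII} applied to $G$, and, combined with the hypothesis that $\delta_Q$ preserves the boundary conditions, it also guarantees that the hamiltonian vector field associated with $G$ remains admissible.

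Finally I would compare the remaining bulk pieces coefficient by coefficient in $\dot z^A$, which yields
\begin{align}
\frac{\delta g}{\delta z^A} &= \sigma_{AB} Q^B, \\
\frac{\d g}{\d t} &= \frac{\delta h}{\delta z^A} Q^A.
\end{align}
The first relation says precisely that $Q^A \frac{\d}{\d z^A} = J\,\delta G$, so $G$ generates the hamiltonian vector field $Q$. Using the Poisson-bracket formula from the previous subsection, the second relation rewrites as $\{G,H\} = \int_\Sigma \frac{\delta g}{\delta z^A}\sigma^{AB}\frac{\delta h}{\delta z^B}\, d^n x = -\frac{\d g}{\d t}$, integrating to \eqref{eq:conservationG}. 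The delicate step is not any single calculation but the logical move of treating $z^A$ and $\dot z^A$ as independent fields in the off-shell identity: this is what converts the symmetry hypothesis simultaneously into the differentiability of $G$ and into the existence of an associated hamiltonian vector field equal to $Q$.
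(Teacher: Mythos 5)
Your proposal is correct and follows essentially the same route as the paper: the same candidate $g=\half\sigma_{AB}z^AQ^B-k$, the same rewriting of the invariance condition as an off-shell identity in $z^A$ and $\dot z^A$, the vanishing of the $I^n_{\dot z}$ boundary term giving differentiability, and the two bulk relations giving $Q^A=\sigma^{AB}\frac{\delta g}{\delta z^B}$ and the conservation law. The only cosmetic slip is writing $\{G,H\}=-\frac{\d g}{\d t}$ where you mean $-\frac{\d G}{\d t}$ (integrand versus integral); the argument itself matches the paper's.
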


\vspace{5mm}

The converse is also true. Let's assume that we have a conserved differentiable
functional $G$, i.e. satisfying equation
\eqref{eq:conservationG}. Then the variation of the action along the
associated hamiltonian vector field $G^A\frac{\d}{\d z^A}$ is given by:
\begin{eqnarray}
\delta_G S &=& \int \left[\int \left(\frac{1}{2} \sigma_{AB}G^A\dot
  z^B + \frac{1}{2} \sigma_{AB}z^A\frac{d}{dt}G^B\right) d^nx - \delta_G
H\right]\,dt\nonumber \\
&=& \int \left[\int \left(\sigma_{AB}G^A\dot
  z^B + \frac{1}{2} \sigma_{AB}\frac{d}{dt}(z^AG^B)\right) d^nx - \left\{H,
G\right\}\right]\,dt\nonumber \\
&=& \int \left[\int \left(-\frac{\delta g}{\delta z^A}\dot
  z^A  - \frac{\d g}{\d t}\right)d^nx  + \frac{d}{dt}\int \frac{1}{2}
  \sigma_{AB} z^AG^B\,d^nx\right]\,dt\nonumber \\
&=& \int \frac{d}{dt} \left[ G+ \int \frac{1}{2} \sigma_{AB} z^AG^B\,d^nx \right]\,dt,
\end{eqnarray}
where we have used the fact that $\dot z^A$ is a variation of the
fields $z^A$ preserving the boundary conditions to obtain the last
line. 

\begin{theorem}
\label{theo:symmetries-2}
If $G$ is a differentiable functional such that
\begin{equation}
\label{eq:conservationGII}
\frac{\d G}{\d t} + \left\{ G, H \right\} = 0,
\end{equation}
then the variation generated by $G$ preserves the action in the sense
of \eqref{eq:conservationaction}.
\end{theorem}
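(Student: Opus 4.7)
The plan is to take the variation $\delta_G$ of the Hamiltonian action directly and massage it into a total time derivative using only three ingredients: the antisymmetry of $\sigma_{AB}$, the differentiability of $G$ (and of $H$), and the assumed conservation equation \eqref{eq:conservationGII}.

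First I would write $\delta_G z^A = G^A = \sigma^{AB}\tfrac{\delta g}{\delta z^B}$ and compute the variation of the kinetic term, obtaining $\tfrac{1}{2}\sigma_{AB}(G^A\dot z^B + z^A\dot G^B)\,d^nx$ under the spacetime integral. Using antisymmetry, the second piece can be re-expressed as $\sigma_{AB}G^A\dot z^B$ up to the total time derivative $\tfrac{1}{2}\tfrac{d}{dt}(\sigma_{AB}z^AG^B)$. For the potential term, since $H$ is a differentiable generator and $G^A\tfrac{\partial}{\partial z^A}$ preserves the boundary conditions, the earlier proposition gives $\delta_G H = \{H,G\}$ with no boundary contribution, and by hypothesis $\{H,G\} = \tfrac{\partial G}{\partial t}$.

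The second step is to convert $\sigma_{AB}G^A = -\tfrac{\delta g}{\delta z^B}$ and recognize that the bulk integrand has become $-\tfrac{\delta g}{\delta z^A}\dot z^A - \tfrac{\partial g}{\partial t}$. This is almost $-\tfrac{d}{dt}\!\int g\,d^nx$, except that pulling $\tfrac{d}{dt}$ out in front generates a boundary term $\oint_{\partial\Sigma} I^n_{\dot z}(g\,d^nx)$ through the homotopy operator, exactly as in the forward direction of the argument. Because $G$ is differentiable and $\dot z^A$ is by assumption a variation preserving the boundary conditions, condition \eqref{eq:propdiffII} kills this boundary term.

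Putting the pieces together, the integrand under $\int dt$ is $-\tfrac{dG}{dt} + \tfrac{1}{2}\tfrac{d}{dt}\!\int\sigma_{AB}z^AG^B\,d^nx$, which is manifestly a total time derivative, so \eqref{eq:conservationaction} holds with $k = -g + \tfrac{1}{2}\sigma_{AB}z^AG^B$. The step I expect to require the most care is the boundary-term bookkeeping when commuting $\tfrac{d}{dt}$ past the spatial integral; everything hinges on the fact that differentiability of $G$ is exactly what is needed to discard $\oint_{\partial\Sigma} I^n_{\dot z}(g\,d^nx)$, and this is precisely why the conservation statement \eqref{eq:conservationaction} can be allowed to hold only up to a total time derivative rather than strict invariance.
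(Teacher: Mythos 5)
Your proposal is correct and follows essentially the same route as the paper's own derivation: vary the kinetic term, use antisymmetry of $\sigma_{AB}$ to split off $\tfrac{1}{2}\tfrac{d}{dt}\!\left(\sigma_{AB}z^AG^B\right)$, replace $\delta_G H$ by $\left\{H,G\right\}=\partial_t G$ via the conservation hypothesis, and reassemble the remaining bulk integrand into a total time derivative of $G$ by using the differentiability of $G$ together with the fact that $\dot z^A$ preserves the boundary conditions to discard $\oint_{\partial\Sigma}I^n_{\dot z}(g\,d^nx)$. The only difference is the sign of $G$ inside the final total derivative, where your $k=-g+\tfrac{1}{2}\sigma_{AB}z^AG^B$ is actually the sign consistent with $\sigma_{AB}G^A=-\tfrac{\delta g}{\delta z^B}$; this is immaterial to the conclusion.
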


\vspace{5mm}

These results are a direct application of Noether's theorem but they
are free of the problems we presented in the introduction. The
conserved charges that we built are defined up to a constant only
and not up to a boundary term. Applying Noether's theorem to a gauge
symmetry will construct the associated generator with the right
boundary term. The source of the problem of gauge theories has been
handled by a careful treatment of the boundary
conditions. We will see in the next section how to apply these ideas to
the computation of surface charges.


\newpage
\section{Applications to Gauge Theories}
\label{sec:surface-charges}

As we saw in the introduction, surface charges are  conserved quantities
associated to gauge-like transformations. For instance, in electromagnetism, we
have the electric charge or, in gravity, the
energy and the angular momentum of the system. In both cases, these
conserved quantities are associated with transformations that look
like gauge transformations. As they are generated by constraints, we expect
these charges to give zero on the constraints surface. We
will see in this section how the notion of differentiable generator introduced in the previous
section solves the problem and associates non-zero charges to a certain
class of gauge-like transformations. 

\vspace{5mm}

Usually
when we want to compute the surface charges of a theory, we have a
set of solutions for which we want to compute those charges. This
means that we only know
the local form of the action of the theory and not all the boundary terms
necessary to make it well defined. In that spirit, we will consider theories of the form
\begin{equation}
\label{eq:SCaction}
S[z^A, \lambda^\alpha] = \int \int \left( \frac{1}{2}
  \sigma_{AB}z^A\dot z^B - h - \lambda^\alpha \gamma_{\alpha} \right)\, d^nxdt,
\end{equation}
where $h$ is the first class hamiltonian density and $\gamma_{\alpha}$
are the full set of
first-class constraints. The weak equality sign $\approx$ will be used
for the equality on the constraints surface $\gamma_\alpha \approx 0$. We will denote $h_T = h + \lambda^\alpha
\gamma_{\alpha}$ the total hamiltonian density. Such an action is in general well defined only for very
restrictive boundary conditions that usually don't contain the
solutions we are interested in.

The first step in the analysis is to
define a set of boundary conditions containing the solutions of
interest and add the right boundary term to \eqref{eq:SCaction} in
order to make everything well defined.

\subsection{Boundary Conditions and Total Hamiltonian}
\label{sec:bound-cond-total}

The choice of boundary conditions is a very tricky one and there are a
lot of different possibilities. The only 
restriction imposed by the consistency of the theory is that the
total Hamiltonian be a differential functional. When dealing with boundaries
at infinity, we will also require finiteness of the total Hamiltonian.

If we want to compute the charges of some particular
solutions associated to specific 
symmetries, we need boundary conditions that both contain the solutions and
are preserved by the symmetries under consideration. 

\vspace{5mm}

Let's assume that a set of boundary conditions for both the dynamical fields $z^A$
and for the lagrange multipliers $\lambda^\alpha$ has been
selected. As in section \ref{sec:well-defined-action}, we need the
boundary conditions on the dynamical fields $z^A$ to be time
independent. The differentiability of $H_T$ implies two conditions. The first one
is that its associated evolutionary vector field $\sigma^{AB} \frac{\delta 
  h_T}{\delta z^B}$ preserves the boundary conditions on $z^A$. The
second condition is that there exist a $(n-1,0)$-form $b(z, \lambda)$
such that the total Hamiltonian defined by
\begin{equation}
H_T = \int_\Sigma h_T d^nx + \oint_{\d \Sigma} b,
\end{equation}
is a differentiable functional. The boundary term must satisfy 
\begin{equation}
- \oint_{\d \Sigma} \delta b = \oint_{\d \Sigma} I^n (h_T d^nx),
\end{equation}
where the right hand side is the boundary term produced by the
variation of the bulk term $\int_\Sigma h_T d^nx$. As we saw in the previous section (proposition
\ref{pr:uptoaconst}), this boundary term is 
defined only up to a constant with respect to $z^A$: it is defined up
to a function of the lagrange multipliers $\lambda^\alpha$.

\vspace{5mm}

The candidate action is then given by
\begin{eqnarray}
\label{eq:SCactionFull}
S[z^A, \lambda^\alpha] &=& \int \left[\int \left( \frac{1}{2}
  \sigma_{AB}z^A\dot z^B - h - \lambda^\alpha \gamma_{\alpha} \right) d^nx
- \oint_{\d \Sigma} b\right]dt \nonumber \\ &=& \int \left( \int \frac{1}{2}
  \sigma_{AB}z^A\dot z^B \,d^nx -  H_T[z^A, \lambda^\alpha]\right)dt.
\end{eqnarray}
The above considerations make this action well defined with respect to
variations of the dynamical fields $z^A$. With this, the hamiltonian
structure of the theory is well defined and we can continue the
analysis. 

In general, a variation with respect to the lagrange multipliers
$\lambda^\alpha$ will still produce boundary term in the action which
lead to extra constraints on the boundary. It will be particularly
useful in section
\ref{sec:BC_on_the_phase-space} to encode part of
the boundary conditions as extra constraints. 
However, in most cases, those boundary
constraints are not welcome. If one wants to remove them, one needs to select the
total Hamiltonian so that the action is also well 
defined with respect to variations of the lagrange multipliers:
\begin{equation}
\delta^\lambda S[z^A, \lambda^\alpha] =\int \left[\int
  \, \delta^\lambda\lambda^\alpha \gamma_{\alpha} \, d^nx 
+ \oint_{\d \Sigma}\delta^\lambda b\right]dt, 
\end{equation}
where $\delta^\lambda$ only hits the lagrange multipliers. It means
that $b$ must be independent of $\lambda^\alpha$. If such a $b$  exists, this
last requirement fixes the boundary term up to a constant. In the following of
this section, we will assume that such a boundary term has been added and that
there are no extra boundary constraints.

\vspace{5mm}

For certain sets of boundary conditions, it might not be possible
to find a boundary term satisfying all the above. This is
referred to as the integrability
problem. In that case, the selected boundary conditions are too relaxed and it
is not possible to write an associated well defined theory. The only
solution is to restrict the boundary conditions.

In the following, we will assume that a set of boundary conditions for
both $z^A$ and $\lambda^\alpha$ has
been selected such that the total Hamiltonian
$H_T$ is a differentiable generator. In general, the boundary
conditions of $\lambda^\alpha$ can depend on $z^A$. It will be useful to
decompose the lagrange multipliers $\lambda^\alpha$ as $\lambda^\alpha = \bar \lambda^\alpha + \mu^\alpha$
such that $\bar \lambda^\alpha$ are fixed in term of $z^A$
 in order to encode this dependence. The quantities
$\mu^\alpha$ are left to vary freely up to boundary conditions
independent of $z^A$ with:
\begin{equation}
\quad H_T[z^A, \lambda^\alpha] -
H_T[z^A, \bar\lambda^\alpha(z^A)] =
\int_\Sigma \mu^\alpha\gamma_\alpha \,d^nx \approx 0.
\end{equation}
The fields $\mu^\alpha$ are the real lagrange multipliers and
$\bar\lambda^\alpha(z^A)$ encode the contribution of $\lambda^\alpha$ to
$H_T$ through the boundary term.

We will also assume that among the boundary conditions we impose the
constraints and all their derivatives on the boundary. This does not remove
any extra degrees of freedom from the theory and it simplifies some
computations.

\vspace{5mm}

\begin{example}
Gravity.

The local action for gravity in $n+1$ dimensions is given by \cite{Arnowitt2008,Regge1974}:
\begin{eqnarray}
S[\pi^{ij}, g_{ij}, N, N^i] &=& \int \int_{\Sigma} \left( \pi^{ij}
  \dot g_{ij} - N \cR - N^i \cR_i\right) d^nxdt,\\
\cR & = & -\sqrt{g} R - \frac{1}{\sqrt{g}} \left(\frac{1}{n-1} \pi^2 -
  \pi_{ij}\pi^{ij}\right),\\
\cR_i & = & -2 \nabla_j \pi^j_i,
\end{eqnarray}
where indices are lowered and raised using the metric $g_{ij}$ and its
inverse $g^{ij}$. The derivative $\nabla_i$ is the covariant derivative associated to $g_{ij}$
and $R$ is the corresponding Ricci scalar. The tensor $\pi^{ij}$ is
treated as a density. The lapse $N$ and shift $N^i$ are coming from
the $3+1$ decomposition of the 4 dimensional metric. We
will assume that $\Sigma$ is a finite manifold with a boundary
$\partial \Sigma$. The
boundary term adapted to the Dirichlet boundary conditions in the
hamiltonian formalism can be found in
\cite{Brown1993}. The boundary conditions are given
by: 
\begin{gather}
N^r\vert_{\partial \Sigma} = 0, \qquad N\vert_{\partial \Sigma} = \bar
N, \\ N^A\vert_{\partial \Sigma} = \bar N^A, \qquad
g_{AB}\vert_{\partial \Sigma} = \gamma_{AB},
\end{gather}
where the coordinates are given by $x^i=r, x^A$ and the boundary
$\partial\Sigma$ is a surface at constant $r$. Fixing $N, N^A, g_{AB}$
is equivalent to fixing $g_{00}, g_{0A}, g_{AB}$ when $N^r=0$. We will assume
these quantities to be time independent.
The surface term needed to make the total Hamiltonian well defined is then
easily computed. The variation of the smeared constraints gives
\begin{multline}
-\oint_{\d\Sigma} I^n \left( N \cR + N^i \cR_i\right) = \\ \oint_{\partial \Sigma} 
 \left\{2 N^A \delta \pi^r_{\phantom k A}  +
  \sqrt{\gamma} N \left( \delta K + \gamma^{AB} \delta K_{AB} \right)\right\} d^{n-1}x,
\end{multline}
where $K$ is the trace of the extrinsic curvature of the boundary.
The total Hamiltonian is then defined as:
\begin{equation}
H_T = \int_{\Sigma}  \left( N \cR + N^i \cR_i\right)d^nx + 2 \oint_{\partial \Sigma} 
\left( N^A \pi^r_{\phantom k A}  +
  N \sqrt{\gamma} K\right) d^{n-1}x.
\end{equation}
When
evaluated on a solution, the total Hamiltonian gives the energy of the
system. In this case, the only non-zero contribution comes from the
boundary term and we have 
\begin{equation}
H_T \approx  2 \oint_{\partial \Sigma} 
 \left(\bar N^A \pi^r_{\phantom k A}  +
 \bar N  \sqrt{\gamma}  K\right) d^{n-1}x.
\end{equation}
This value of the Hamiltonian is tied to the boundary conditions we
imposed on $N$ and $N^i$. This reflects the fact that those boundary
values are not behaving as lagrange multipliers but carry information
about the dynamics of the system.
\end{example}

\subsection{Differentiable Gauge Transformations}
\label{sec:Diff-gauge}

Gauge transformations are transformations generated by
the first-class constraints of the theory through the Poisson
bracket. In field theories, we call gauge transformation any
transformation of the form
\begin{equation}
\label{eq:gaugez}
\delta_{\eta} z^A = \sigma^{AB}\frac{\delta}{\delta
z^B}( \eta^\alpha \gamma_\alpha),
\end{equation}
where the gauge parameters $\eta^\alpha$ can be functions of the
dynamical fields $z^A$. 
The algebra of these transformations closes:
\begin{equation}
\label{eq:gaugealg}
\left[\delta_\eta, \delta_\rho\right] = \delta_\epsilon, \qquad \left[
  \eta, \rho\right]^\alpha_g \equiv \epsilon^\alpha
\end{equation}
where $\delta_\epsilon$ is a gauge transformation and $\left[
  \eta, \rho\right]^\alpha_g$ is the bracket induced on the gauge
parameters. 
 The transformations \eqref{eq:gaugez} can be extended to the
lagrange multipliers $\lambda^\alpha$ in order to leave the action
invariant up to a boundary term \cite{Henneaux1992}. The resulting variation is:
\begin{equation}
\label{eq:gaugel}
\delta_\eta \lambda^\alpha = \frac{\d}{\d t} \eta^\alpha + [\lambda,
\eta]^\alpha_g - V(\eta)^\alpha.
\end{equation}
The quantity $V(\eta)^\alpha$ is defined from the first class
hamiltonian $h$ by 
\begin{equation}
\delta_\eta h = V(\eta)^\alpha\gamma_{\alpha},
\end{equation}
where the equality is up to boundary terms.
In general, gauge transformations describe the
redundancy of the description but, as we saw in the introduction, in the
presence of a spatial boundary the story is different. In order to
avoid confusion, we will call the transformations
\eqref{eq:gaugez}-\eqref{eq:gaugel} gauge-like transformations.

\vspace{5mm}

The above considerations are only local and don't take into account
the boundary structure of the theory. As we saw in section
\ref{sec:Symplectic-struct}, only differentiable functionals are
allowed in the Poisson bracket, however there is no guarantee that
gauge-like transformations are generated by differentiable functionals.

For this to happen, we saw in section \ref{sec:Symplectic-struct}
that one needs two requirements to be satisfied. The first is the preservation of the boundary
conditions of $z^A$:
they must transform allowed configurations into allowed
configurations of the fields. Requiring that the transformations
\eqref{eq:gaugez} preserve the boundary conditions
of $z^A$ will impose boundary conditions on the
gauge parameters $\eta^\alpha$.  Let's remark that imposing the constraints
and their derivatives on the boundary don't restrict the set of allowed
gauge-like transformations.

An important observation here is that
we don't need to require the preservation of the boundary conditions
of the lagrange multipliers in order to build differentiable generators. We will see in section
\ref{sec:asympt-symm} that this additional restriction selects
 transformations that are also symmetries of the theory. This confirms
 the fact that the boundary conditions of $\lambda^\alpha$ contains
 dynamical information.

The second condition comes from requiring the existence of a suitable
boundary term to complete the generator. Its bulk part is given
by:
\begin{equation}
\bar\Gamma_\eta[z^A] = \int_\Sigma \eta^\alpha \gamma_\alpha\, d^nx.
\end{equation}
We need a $(n-1,0)$-form $k_\eta$ such that:
\begin{equation}
\oint_{\d \Sigma} \delta k_\eta = - \oint_{\d \Sigma} I^n (\eta^\alpha \gamma_\alpha d^nx).
\end{equation}
If we can find such form $k_\eta$, then the generator defined by
\begin{equation}
\label{eq:gaugegenefull}
\Gamma_\eta[z^A]  = \bar \Gamma_\eta[z^A]  + \oint_{\d \Sigma} k_\eta
\end{equation}
is differentiable.
It is not always possible to find such a boundary term and the 
integrability problem might also appear here. In that case, only a subset of the allowed
gauge-like transformations will have an associate differentiable
generator. We will call this subset of gauge-like transformations
differentiable gauge transformations.

\vspace{5mm}

As the two requirements we added for differentiable gauge
transformations are preserved by the bracket of evolutionary vector fields,
the differentiable gauge transformations form a subalgebra of the
algebra of gauge-like transformations \eqref{eq:gaugealg}. The
algebra of the associated generators forms a representation of this subalgebra:
\begin{theorem}
\label{theo:algebraI}
Let the phase-space $\cF(\Sigma)$ be path-connected. 
If $\eta^\alpha$ and $\rho^\alpha$ are two differentiable gauge
transformations then:
\begin{equation}
\label{eq:algecharges}
\left\{ \Gamma_\eta[z^A], \Gamma_\rho[z^A]\right\} =
\Gamma_{[\rho,\eta]_g}[z^A] + \cK_{\eta, \rho}
\end{equation}
where $\cK_{\eta, \rho}$ is a possible central extension. It satisfies
\begin{gather}
\label{eq:centralext}
\cK_{\eta, \rho} = -\cK_{\rho, \eta},\\
\cK_{[\eta, \rho]_g, \epsilon}+\cK_{[\rho, \epsilon]_g, \eta}+\cK_{[\epsilon, \eta]_g, \rho}=0,
\end{gather}
for all differentiable gauge transformations $\eta^\alpha$,
$\rho^\alpha$ and $\epsilon^\alpha$.
\end{theorem}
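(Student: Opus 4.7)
The plan is to combine three results established earlier in the paper: the corollary stating that the map from differentiable generators to hamiltonian vector fields is a Lie algebra homomorphism, the closure \eqref{eq:gaugealg} of the gauge algebra, and Proposition \ref{pr:uptoaconst}, which fixes a differentiable functional up to a constant once its hamiltonian vector field is known on a path-connected phase space.

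First I would identify the hamiltonian vector field $X_\eta$ associated to $\Gamma_\eta$. Because $\Gamma_\eta$ is differentiable, its Euler--Lagrange derivative equals $\frac{\delta}{\delta z^A}(\eta^\alpha \gamma_\alpha)$ and $X_\eta$ has characteristic $\sigma^{AB}\frac{\delta}{\delta z^B}(\eta^\alpha \gamma_\alpha)$, which is precisely the gauge-like variation $\delta_\eta$ of \eqref{eq:gaugez}. The proof of the Poisson bracket proposition established $\{F, G\}^A = -[F, G]^A$; combined with \eqref{eq:gaugealg} and linearity of $\Gamma$ in its label, this yields
\begin{equation}
\{\Gamma_\eta, \Gamma_\rho\}^A = -[X_\eta, X_\rho]^A = -\delta_{[\eta,\rho]_g}z^A = \delta_{[\rho,\eta]_g}z^A,
\end{equation}
which is the hamiltonian vector field of $\Gamma_{[\rho,\eta]_g}$. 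Since both functionals are differentiable (the bracket by the Poisson bracket proposition, and $\Gamma_{[\rho,\eta]_g}$ because the differentiable gauge transformations form a subalgebra), Proposition \ref{pr:uptoaconst} yields equality up to a field-independent constant, which I would denote $\cK_{\eta,\rho}$; this establishes \eqref{eq:algecharges}.

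I would then derive the properties \eqref{eq:centralext} purely algebraically. Antisymmetry of $\cK$ is obtained by comparing $\{\Gamma_\eta,\Gamma_\rho\} = -\{\Gamma_\rho,\Gamma_\eta\}$ and using $\Gamma_{[\rho,\eta]_g} = -\Gamma_{[\eta,\rho]_g}$. For the Jacobi-like relation, I would apply the Jacobi identity of the Poisson bracket to the triple $(\Gamma_\eta, \Gamma_\rho, \Gamma_\epsilon)$ and expand each inner bracket via \eqref{eq:algecharges}. The central constants drop out of the outer bracket because $\{c, \Gamma\} = 0$ for any constant $c$. The surviving $\Gamma$-terms sum to zero by the Jacobi identity of the gauge algebra $[\cdot,\cdot]_g$, leaving a cyclic sum of central terms that, after using antisymmetry of $[\cdot,\cdot]_g$ together with linearity of $\cK$, reduces to the identity claimed in \eqref{eq:centralext}.

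The main subtlety I anticipate is justifying that $\Gamma_{[\rho,\eta]_g}$ is itself a \emph{differentiable} gauge generator so that the right-hand side of \eqref{eq:algecharges} makes sense. Preservation of the boundary conditions on $z^A$ under the commutator is automatic, since the Lie bracket of two evolutionary vector fields preserving a condition preserves it; the existence of a suitable boundary $(n-1,0)$-form $k_{[\rho,\eta]_g}$ completing the bulk integral $\int_\Sigma [\rho,\eta]_g^\alpha \gamma_\alpha\, d^nx$ into a differentiable functional is guaranteed by applying the proof of the Poisson bracket proposition to $\{\Gamma_\eta,\Gamma_\rho\}$, whose bulk part necessarily coincides on-shell with the bulk part of $\Gamma_{[\rho,\eta]_g}$. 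Once this structural point is in place, the remainder of the argument reduces to the algebraic manipulations above.
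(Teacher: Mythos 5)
Your proof is correct and follows essentially the same route as the paper's: identify the hamiltonian vector field of $\Gamma_\eta$ with the gauge-like transformation $\delta_\eta$, invoke the Lie-algebra homomorphism onto hamiltonian vector fields together with closure of the gauge algebra, and use Proposition \ref{pr:uptoaconst} on the path-connected phase-space to fix the difference to a constant $\cK_{\eta,\rho}$. The paper's own proof is far more terse and does not spell out the cocycle identities \eqref{eq:centralext} or the differentiability of $\Gamma_{[\rho,\eta]_g}$, both of which you handle correctly.
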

\begin{proof}
The proof of this theorem is direct. We know that the algebra of the
hamiltonian transformations is homomorphic to the algebra of the
Hamiltonian generators. The commutator of two differentiable gauge
transformations is a gauge-like transformation $ [\eta, \rho]^\alpha_g$, the Poisson bracket of the
associated generators will be a differentiable generator associated to
its resulting gauge-like transformation $ [\eta, \rho]^\alpha_g$.
The possibility of a central extension comes
from the fact that hamiltonian generators are defined up to a
constant \cite{Arnold1989}.
\end{proof}
\noindent If one drops the hypothesis of path-connectedness of the
phase-space, the extension of the algebra can become field dependent:
it could take different values on different path-connected components of the phase-space.

\vspace{5mm}

Differentiable gauge transformations split into two categories, the proper and
improper gauge transformations:
\begin{itemize}
\item proper gauge transformations $\delta_\eta$ are defined by
  $\Gamma_\eta[z^A] \approx 0$ for all configurations on the
  constraints surface. These transformations are generated by
  constraints of the theory.
\item improper gauge symmetries $\delta_\eta$ are those for which there exist
  field configurations on the constraints surface such that
  $\Gamma_\eta[z^A] \ne 0$.
\end{itemize}
There is a common misconception that improper gauge transformations
are still generated by constraints, first or second class. This is not
the case since, for improper gauge transformations, the integral
$\int_\Sigma \eta^\alpha \gamma_\alpha \,d^nx$ is not a
differentiable functional: it can not enter the Poisson bracket.

The proper gauge transformations form an ideal subalgebra of the
differentiable gauge transformations: if $\eta^\alpha$ is a proper
gauge transformation and $\rho^\alpha$ is a differential gauge
transformation, the commutator of the two gauge transformations
$[\eta, \rho]_g$ is
generated by
\begin{equation}
\left\{\Gamma_\rho[z^A], \Gamma_\eta[z^A] \right\} = -\delta_\rho
\Gamma_\eta[z] \approx 0.
\end{equation}
This is zero on the constraint surface because
$\Gamma_\eta[z^A]\approx 0$ and $\delta_\rho$ preserves the
constraints. 

\vspace{5mm}

Up until now, we have been very careful of not talking about
symmetries. The differentiable gauge transformations are in general
not symmetries of the theory: they don't satisfy
\begin{equation}
\label{eq:consGamma}
\frac{\d}{\d t} \Gamma_{\eta} + \left\{\Gamma_{\eta}, H_T \right\}
\approx 0.
\end{equation}
The weak equality here denotes equality on the constraints surface and
is the conservation condition \eqref{eq:conservationGII} for gauge
theories \cite{Henneaux1992}.

However, proper gauge transformations are symmetries of the system. It comes
from the fact that the evolutionary vector field associated to $H_T$
transforms allowed configurations into allowed configurations and
preserves the constraints which implies
\begin{equation}
\left\{\Gamma_{\eta}, H_T \right\} \approx 0
\end{equation}
for proper gauge transformations $\delta_\eta$. Combining it with
$\Gamma_{\eta}\approx 0 \Rightarrow\frac{\d}{\d
  t}\Gamma_{\eta}\approx 0$ we obtain the conservation condition
\eqref{eq:consGamma}. 

The proper gauge transformations are the real gauge symmetries of the
system. They represent the redundancy of the description. On the other
hand, part of the
improper gauge transformations will satisfy \eqref{eq:consGamma} and
also be symmetries of the system. Those transformations are
global symmetries of the theory, they change the state of the
system. They are called asymptotic symmetries and will be the subject
of  section \ref{sec:asympt-symm}.

\vspace{5mm}

The fact that proper gauge transformations form an ideal of the
differentiable gauge transformations leads to a very interesting
result
\begin{theorem}
The differentiable gauge transformations are first-class quantities.
\end{theorem}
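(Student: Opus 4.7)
The plan is to verify the Dirac first-class condition in the form appropriate to this framework: for every differentiable gauge transformation generator $\Gamma_\eta$ and every generator $\Gamma_\rho$ of a proper gauge transformation (which, as emphasised earlier in this section, is the only way a smeared first-class constraint legitimately enters the Poisson bracket), the bracket $\left\{\Gamma_\eta,\Gamma_\rho\right\}$ vanishes weakly. This is exactly what it means for $\Gamma_\eta$ to be a first-class quantity in the present canonical setting.

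The algebraic input I would use is Theorem \ref{theo:algebraI}, which gives
\begin{equation*}
\left\{\Gamma_\eta[z^A], \Gamma_\rho[z^A]\right\} = \Gamma_{[\rho,\eta]_g}[z^A] + \cK_{\eta,\rho}.
\end{equation*}
The first term on the right is weakly zero: the paragraph preceding the theorem establishes that the proper gauge transformations form an ideal in the algebra of differentiable gauge transformations, so $[\rho,\eta]_g$ is itself proper, and by definition $\Gamma_{[\rho,\eta]_g}\approx 0$.

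To dispose of the central extension I would reuse the direct calculation that was used to prove the ideal property, namely $\left\{\Gamma_\rho,\Gamma_\eta\right\} = -\delta_\rho\Gamma_\eta \approx 0$. This relies only on two facts: that $\delta_\rho$ preserves the constraint surface (so the bulk piece $\int \eta^\alpha\gamma_\alpha\, d^nx$ of $\Gamma_\eta$ stays weakly zero), and that the parameters of a proper $\rho$ vanish sufficiently fast at $\d\Sigma$ to leave the boundary data of $z^A$ — and hence the boundary contribution $\oint k_\eta$ — untouched. Combined with $\Gamma_{[\rho,\eta]_g}\approx 0$ this forces $\cK_{\eta,\rho}\approx 0$; but by Proposition \ref{pr:uptoaconst} the central extension is a constant on the path-connected phase space $\cF(\Sigma)$, independent of $z^A$, so weak vanishing on the non-empty constraint surface upgrades to strict vanishing. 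Hence $\left\{\Gamma_\eta,\Gamma_\rho\right\}\approx 0$ for every proper $\rho$, which is the assertion of the theorem.

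The main obstacle, and really the only subtlety, is the elimination of $\cK_{\eta,\rho}$: one has to make sure that the identity $\delta_\rho\Gamma_\eta\approx 0$ holds even when $\Gamma_\eta$ carries a non-trivial boundary piece $\oint k_\eta$, which in turn requires that the class of proper parameters $\rho^\alpha$ adopted in the paper is restrictive enough near $\d\Sigma$ — vanishing together with whichever transverse derivatives the boundary data entering $k_\eta$ depends on — to guarantee that the induced variation of that boundary data is zero. Once this property of proper parameters is granted, the rest of the argument is an immediate application of Theorem \ref{theo:algebraI} and the ideal property.
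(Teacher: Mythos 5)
Your proposal is correct in substance and rests on the same essential ingredient as the paper, namely the computation $\left\{ \Gamma_\rho, \Gamma_\eta\right\} = -\delta_\rho\Gamma_\eta \approx 0$ used to establish that the proper gauge transformations form an ideal; the paper in fact offers no separate proof of the theorem, treating it as an immediate consequence of that paragraph. Two remarks on where your route diverges. First, the detour through Theorem \ref{theo:algebraI} and the central extension $\cK_{\eta,\rho}$ is logically redundant: the direct computation you invoke precisely in order to kill $\cK_{\eta,\rho}$ already establishes $\left\{\Gamma_\eta,\Gamma_\rho\right\}\approx 0$ on its own, which is the full statement of the theorem, so the decomposition into $\Gamma_{[\rho,\eta]_g}$ plus a central term adds nothing. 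Second, the ``main obstacle'' you identify --- checking that a proper $\delta_\rho$ leaves the boundary piece $\oint k_\eta$ of an improper generator untouched --- arises only because you evaluate the bracket as $-\delta_\rho\Gamma_\eta$ with $\rho$ proper acting on the differentiable (possibly improper) $\Gamma_\eta$. The paper evaluates it in the opposite order: by antisymmetry $\left\{\Gamma_\eta,\Gamma_\rho\right\} = -\delta_\eta\Gamma_\rho$, which is the variation of the \emph{proper} generator $\Gamma_\rho\approx 0$ along the constraint-preserving flow $\delta_\eta$, and is therefore weakly zero with no analysis of boundary terms required. So the subtlety you flag is real for your chosen direction but dissolves entirely in the paper's; it would be cleaner to use the antisymmetry of the bracket from the outset rather than impose additional fall-off on the transverse derivatives of the proper parameters.
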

First-class functionals in gauge theories are the building
blocks. They are gauge invariant and, as such, are the observables of
the theory. In most theories, the set of first-class differentiable functionals
that one can build in this way is rather small and does not bring a
lot of information, see example below. However, in theories with no
local degrees of freedom, this set is a lot bigger and we will see in a
few examples in section \ref{sec:reduced-phase-space} that, up to
topological information, it can
describe completely the reduced phase-space of the theory.

\begin{example}
\label{firstclassEM}
Electromagnetism in 4D

As we saw in example \ref{ex:EM}, the action is written:
\begin{equation}
S[E^i, A_i, A_0] = \int \int_\Sigma \left[-E^i \dot A_i -
  \left(\frac{1}{2}E^iE_i + \frac{1}{4}F_{ij}F^{ij} \right) +
  E^i \d_i A_0\right] d^3xdt,
\end{equation}
where the values of $A_0$ and $A_I$ are fixed on the boundary
$\d \Sigma$. The gauge-like transformations are given by:
\begin{equation}
\delta_\eta A_j = \frac{\delta (\eta \d_i E^i)}{\delta E^j} \approx \d_j \eta.
\end{equation}
Preservation of the boundary conditions for $A_i$ imposes that $\d_A
\eta \approx 0$ on the boundary $\d \Sigma$. The gauge parameter must
tend to a constant $\eta_R$ on the boundary when the constraints are
satisfied. Those  transformations are generated by:
\begin{equation}
\Gamma_\eta = \int_\Sigma \eta \d_i E^i \,d^3x- \oint_{\d
  \Sigma} \eta E^r \approx - \eta_R \, Q\, d\Omega^2,
\end{equation}
where $Q$ is the electric charge  $Q = \oint_{\d
  \Sigma} E^r\, d^2\Omega$.
\end{example}

\subsection{Classification of Boundary Conditions}
\label{sec:poss-bound-cond}

As we saw in section \ref{sec:CF-FieldTheories}, only the boundary
conditions for $z^A$ are part of the definition of the canonical
structure of the theory. The boundary conditions on the lagrange
multipliers $\lambda^\alpha$ appear for the definition of the total
Hamiltonian only: they contain dynamical information.

Let's assume that, for fixed boundary conditions on $z^A$, we have
two different sets of boundary conditions for the 
lagrange multipliers, $\lambda^\alpha = \bar 
\lambda^\alpha_1+\mu^\alpha$ and $\lambda^\alpha = \bar
\lambda^\alpha_2+\mu^\alpha$, leading to two differentiable total
Hamiltonians, $H_{T1}$ and $H_{T2}$. The boundary
conditions for $\mu^\alpha$ are the same as it will span the set of
proper gauge transformations which is independent of $H_T$.
The difference $H_{T1}-H_{T2}$ is a differentiable generator and its
bulk term is given by a sum of constraints: it is the generator of a differentiable
gauge transformation.

Conversely, let's assume that we have boundary conditions for
$\lambda^\alpha$, their associated hamiltonian $H_T$ and a
differentiable generator $\Gamma_\eta$. We can build new boundary
conditions for $\lambda^\alpha$ such that the new hamiltonian is given
by $H_T+\Gamma_\eta$. The answer is obviously:
\begin{equation}
\bar\lambda_2^\alpha= \bar\lambda^\alpha + \eta^\alpha, \qquad \lambda^\alpha = \bar\lambda^\alpha + 
\eta^\alpha + \mu^\alpha, 
\end{equation}
leading to
\begin{equation}
H_{T2} = H_T +\Gamma_\eta.
\end{equation}
The new hamiltonian being the sum of two differentiable generators is
differentiable. If $\Gamma_\eta$ is the generator of a proper gauge
transformation, the two hamiltonian are equivalent: $H_{T2} \approx
H_T$.

\begin{theorem}
For gauge theories of the form \eqref{eq:SCaction}, once boundary
conditions for the dynamical variables $z^A$ have been selected, the
possible boundary conditions for the lagrange multipliers
$\lambda^\alpha$ are in one to one correspondance with the
differentiable gauge transformations modulo proper gauge transformations.
\end{theorem}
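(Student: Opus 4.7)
The plan is to exhibit the bijection directly by tracking how the total Hamiltonian changes when one changes the boundary conditions on the $\lambda^\alpha$. The setup already fixes the boundary conditions on $z^A$, so the symplectic structure, the bulk density $h_T$, and the space of proper gauge transformations (which is controlled by the free parts $\mu^\alpha$) are common to all boundary choices. What varies is the fixed part $\bar\lambda^\alpha(z)$ and the corresponding boundary term $b$ appended to $H_T$.

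First I would treat the forward direction. Given two admissible splittings $\lambda^\alpha=\bar\lambda_1^\alpha+\mu^\alpha$ and $\lambda^\alpha=\bar\lambda_2^\alpha+\mu^\alpha$ with differentiable total Hamiltonians $H_{T1}$ and $H_{T2}$, I consider
\begin{equation}
\Delta := H_{T1}-H_{T2}.
\end{equation}
Since $H_{T1}$ and $H_{T2}$ are both differentiable and both generate flows preserving the boundary conditions on $z^A$, so does $\Delta$. Its bulk density is $(\bar\lambda_1^\alpha-\bar\lambda_2^\alpha)\gamma_\alpha$, a pure sum of constraints, so $\Delta$ is by construction the generator $\Gamma_\eta$ of a differentiable gauge transformation with parameter $\eta^\alpha=\bar\lambda_1^\alpha-\bar\lambda_2^\alpha$. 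This defines the forward map $(\bar\lambda_1,\bar\lambda_2)\mapsto\Gamma_\eta$.

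Next I would do the converse (surjectivity). Starting from one admissible set of boundary conditions with hamiltonian $H_T$ and from an arbitrary differentiable gauge transformation $\Gamma_\eta$, I set
\begin{equation}
\bar\lambda_2^\alpha = \bar\lambda^\alpha+\eta^\alpha,
\end{equation}
keeping the boundary conditions on $\mu^\alpha$ unchanged. Because $\eta^\alpha$ itself satisfies the boundary conditions making $\Gamma_\eta$ differentiable, the new $\bar\lambda_2^\alpha$ is an admissible shift, and the new total Hamiltonian is $H_{T2}=H_T+\Gamma_\eta$, which is differentiable as a sum of two differentiable functionals. This shows every differentiable $\Gamma_\eta$ arises as a difference, i.e., the map is surjective.

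Finally, I would quotient by proper gauge transformations to get a bijection. If $\Gamma_\eta$ is proper, then $\Gamma_\eta\approx 0$, so $H_{T2}\approx H_T$ on the constraint surface; the two boundary-condition choices define the same physical theory and should be identified. Conversely, if two splittings $\bar\lambda_1$ and $\bar\lambda_2$ give equivalent dynamics, the corresponding $\Gamma_\eta=H_{T1}-H_{T2}$ vanishes on the constraint surface, which is exactly the definition of a proper gauge transformation. Hence the correspondence descends to a bijection between equivalence classes of boundary conditions on $\lambda^\alpha$ and differentiable gauge transformations modulo proper ones. The delicate step — and the one I expect to be the real content — is the forward direction: verifying that $\Delta$ really is a \emph{differentiable gauge} generator in the sense of Section~\ref{sec:Diff-gauge}, namely that its bulk is purely a sum of constraints (no $h$ contribution survives because $h$ is the same in both theories) and that its characteristic preserves the boundary conditions on $z^A$ (inherited from the differentiability of $H_{T1}$ and $H_{T2}$ separately).
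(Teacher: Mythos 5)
Your proposal is correct and follows essentially the same route as the paper: forming the difference $H_{T1}-H_{T2}$ to get a differentiable gauge generator with bulk $(\bar\lambda_1^\alpha-\bar\lambda_2^\alpha)\gamma_\alpha$, shifting $\bar\lambda^\alpha\mapsto\bar\lambda^\alpha+\eta^\alpha$ for surjectivity, and identifying choices that differ by a proper generator since then $H_{T2}\approx H_T$. Your extra remarks on why the difference is genuinely a differentiable gauge generator (cancellation of $h$ and inheritance of boundary-condition preservation) only make explicit what the paper leaves implicit.
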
 

The various theories obtained that way share the same local form of the action
\eqref{eq:SCaction}. However, the value of their total Hamiltonian
will be different due to a different boundary term. The theories
obtained are different.

\subsection{Asymptotic Symmetries}
\label{sec:asympt-symm}

We saw in section \ref{sec:Diff-gauge} that proper gauge transformations
are symmetries of the theory. As they are generated by functionals
that are zero on the constraint surface they are the set of true gauge
transformations of the theory. We will now study the set of asymptotic
symmetries which are the global symmetries of the theory hidden inside
the set of improper gauge transformations.

The necessary and sufficient condition for an improper gauge
transformation $\delta_\eta$ to be a symmetry is that its generator $\Gamma_\eta[z]$
satisfies equation \eqref{eq:conservationGII}:
\begin{equation}
\label{eq:consGammaII}
\frac{\d}{\d t} \Gamma_{\eta} + \left\{\Gamma_{\eta}, H_T \right\}
\approx 0.
\end{equation} 
The only explicit time
dependance of $\Gamma_\eta[z^A]$ is in the gauge parameter
$\eta^\alpha$:
\begin{equation}
\frac{\d}{\d t} \Gamma_\eta[z^A] = \Gamma_{\dot \eta}[z^A]
\end{equation}
and, because $\Gamma_\eta[z^A]$ is differentiable, $\Gamma_{\dot
  \eta}[z^A]$ is also differentiable. It implies that 
the functional $F$ appearing in \eqref{eq:consGammaII} 
is differentiable:
\begin{eqnarray}
\label{eq:consesurfchar_bis}
F[z^A, \eta^\alpha, \lambda^\alpha]& \equiv & \Gamma_{\dot \eta}[z^A] +
\left\{\Gamma_\eta[z^A], H_T[z^A, \lambda^\alpha]
\right\}.
\end{eqnarray}
By definition of the variation of the lagrange multipliers under a
gauge transformation \eqref{eq:gaugel}, we know that the bulk term of $F$ is
given by $\int_\Sigma \delta_\eta \lambda^\alpha \gamma_\alpha
\,d^nx$: it is the
generator of the differentiable gauge transformation with parameter
$\delta_\eta \lambda$:
\begin{equation}
F[z^A, \eta^\alpha, \lambda^\alpha] = \Gamma_{\delta_\eta \lambda}[z^A].
\end{equation}
We have proved the following result:
\begin{theorem}
An improper gauge transformation $\delta_\eta$ is a symmetry of the
theory if and only if the differentiable gauge transformation generated
by $\Gamma_{\delta_\eta \lambda}[z^A]$ is a proper gauge
transformation. 
\end{theorem}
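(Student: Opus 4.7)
The plan is to turn the symmetry condition \eqref{eq:consGammaII} into a direct statement about $\Gamma_{\delta_\eta\lambda}$ by identifying the two differentiable functionals on each side. First I would write
\[
F[z^A,\eta^\alpha,\lambda^\alpha] \;=\; \Gamma_{\dot\eta}[z^A] + \bigl\{\Gamma_\eta[z^A], H_T[z^A,\lambda^\alpha]\bigr\},
\]
noting that $\partial_t \Gamma_\eta = \Gamma_{\dot\eta}$ because the only explicit time dependence inside $\Gamma_\eta$ sits in $\eta^\alpha$. Then $F$ is manifestly a sum of differentiable functionals: $\Gamma_{\dot\eta}$ is differentiable since $\dot\eta$ inherits the boundary falloff of $\eta$, and the Poisson bracket of two differentiable functionals is differentiable by the proposition of Section \ref{sec:Symplectic-struct}.

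The main step is to show that the bulk part of $F$ coincides with $\int_\Sigma \delta_\eta\lambda^\alpha\gamma_\alpha\,d^nx$. For this I would compute $\{\Gamma_\eta,H_T\}$ at the bulk level, using that in the local expression the Poisson bracket reproduces the action of the Hamiltonian vector field of $\Gamma_\eta$ on $h_T$ up to terms proportional to the constraints. Comparing with the definition \eqref{eq:gaugel} of $\delta_\eta\lambda^\alpha$, namely $\delta_\eta\lambda^\alpha = \dot\eta^\alpha + [\lambda,\eta]_g^\alpha - V(\eta)^\alpha$, each piece of $\delta_\eta\lambda^\alpha\gamma_\alpha$ appears: $\dot\eta^\alpha\gamma_\alpha$ from $\Gamma_{\dot\eta}$, $[\lambda,\eta]_g^\alpha\gamma_\alpha$ from the bracket of the two constraint pieces (using the first-class algebra), and $-V(\eta)^\alpha\gamma_\alpha$ from the action of $\delta_\eta$ on $h$. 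This is the expected bulk identity; the fact that \eqref{eq:gaugel} was precisely defined to make the action invariant under gauge transformations guarantees it.

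Once the bulk identification is in hand, I would invoke Proposition~\ref{pr:uptoaconst}: both $F$ and $\Gamma_{\delta_\eta\lambda}$ are differentiable, and their bulk densities agree, so they generate the same Hamiltonian vector field and therefore differ at most by a constant (on each connected component of the phase space). Absorbing that constant, $F = \Gamma_{\delta_\eta\lambda}$, and the symmetry condition \eqref{eq:consGammaII} becomes $\Gamma_{\delta_\eta\lambda}[z^A]\approx 0$, which is exactly the definition of $\delta_{\delta_\eta\lambda}$ being a proper gauge transformation. Both directions follow at once.

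The expected obstacle is the bulk computation in the previous paragraph: keeping careful track of which terms genuinely come from the bulk of the bracket, and which are boundary contributions that vanish by differentiability of $\Gamma_\eta$ and $H_T$ together with preservation of the boundary conditions on $z^A$. Everything else is a direct application of results already established in Sections \ref{sec:Symplectic-struct}--\ref{sec:Diff-gauge}.
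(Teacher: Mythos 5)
Your proposal follows essentially the same route as the paper: identify $F=\Gamma_{\dot\eta}+\{\Gamma_\eta,H_T\}$ as a differentiable functional whose bulk density is $\delta_\eta\lambda^\alpha\gamma_\alpha$ by the definition \eqref{eq:gaugel}, conclude $F=\Gamma_{\delta_\eta\lambda}$, and read off the equivalence between the symmetry condition $F\approx 0$ and properness of $\delta_{\delta_\eta\lambda}$. The only additions are the explicit bulk bookkeeping and the appeal to Proposition~\ref{pr:uptoaconst}, which the paper leaves implicit; both are correct.
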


The set of differentiable gauge symmetries forms a subalgebra of the
differentiable gauge transformations and the proper gauge transformations
form an ideal of this subalgebra. This leads to the standard definition:
\begin{definition}
The algebra obtained by taking the quotient of the
algebra of the differentiable gauge symmetries by the proper
gauge symmetries is called the asymptotic symmetry algebra.
\end{definition}
\noindent As we said earlier, it is the algebra of global symmetries of the theory hidden inside the set
of gauge-like transformations. Their differentiable generators $\Gamma_\eta$ are
constants of motion given by boundary terms when evaluated on the constraints
\begin{equation}
\Gamma_\eta[z^A] \approx \oint_{\d \Sigma} k_\eta.
\end{equation}
Those quantities are called surface charges. They form a
representation of the asymptotic symmetry algebra through the Poisson bracket:
\begin{theorem}
Let the phase-space $\cF(\Sigma)$ be path-connected. 
If $\eta^\alpha$ and $\rho^\alpha$ are two differentiable gauge
symmetries then:
\begin{equation}
\label{eq:algechargesII}
\left\{ \Gamma_\eta[z^A], \Gamma_\rho[z^A]\right\} =
\Gamma_{[\rho,\eta]_g}[z^A] + \cK_{\eta, \rho}
\end{equation}
where 
$\cK_{\eta, \rho}$ is a possible central extension. It satisfies
\begin{gather}
\label{eq:centralextII}
\cK_{\eta, \rho} = -\cK_{\rho, \eta},\\
\cK_{[\eta, \rho]_g, \epsilon}+\cK_{[\rho, \epsilon]_g, \eta}+\cK_{[\epsilon, \eta]_g, \rho}=0,
\end{gather}
for all differentiable gauge symmetries $\eta^\alpha$,
$\rho^\alpha$ and $\epsilon^\alpha$.
\end{theorem}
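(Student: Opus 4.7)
The plan is to reduce the statement to Theorem~\ref{theo:algebraI}, since the algebraic relation \eqref{eq:algechargesII} and the cocycle conditions \eqref{eq:centralextII} are formally identical to \eqref{eq:algecharges} and \eqref{eq:centralext}. The only genuinely new content is closure: the bracket $[\rho,\eta]_g$ of two differentiable gauge symmetries must again be a differentiable gauge symmetry, i.e. its generator $\Gamma_{[\rho,\eta]_g}$ must satisfy the conservation condition \eqref{eq:consGammaII}.

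First, I would note that $\eta^\alpha$ and $\rho^\alpha$ are in particular differentiable gauge transformations, so Theorem~\ref{theo:algebraI} applies directly and gives \eqref{eq:algechargesII} with $\cK_{\eta,\rho}$ a constant on each path-connected component of $\cF(\Sigma)$. The antisymmetry $\cK_{\eta,\rho}=-\cK_{\rho,\eta}$ then follows from antisymmetry of the Poisson bracket together with $[\eta,\rho]_g=-[\rho,\eta]_g$, and the 2-cocycle identity follows by applying the Jacobi identity of the Poisson bracket to $\Gamma_\eta,\Gamma_\rho,\Gamma_\epsilon$, substituting \eqref{eq:algechargesII} iteratively, and using the Jacobi identity of the gauge-parameter bracket $[\cdot,\cdot]_g$ to cancel the $\Gamma$-terms, leaving only the constants $\cK$.

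The main obstacle is therefore the closure step. Using the Jacobi identity of the Poisson bracket one computes
\begin{equation}
\{\{\Gamma_\eta,\Gamma_\rho\},H_T\}=\{\Gamma_\eta,\{\Gamma_\rho,H_T\}\}-\{\Gamma_\rho,\{\Gamma_\eta,H_T\}\}.
\end{equation}
The hypothesis \eqref{eq:consGammaII} applied to $\eta$ and $\rho$ gives $\{\Gamma_\rho,H_T\}\approx -\Gamma_{\dot\rho}$ and likewise for $\eta$, so the right-hand side becomes $-\{\Gamma_\eta,\Gamma_{\dot\rho}\}+\{\Gamma_\rho,\Gamma_{\dot\eta}\}$. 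Applying \eqref{eq:algechargesII} again (with the central pieces being constants, hence annihilated by $\delta_\eta,\delta_\rho$) one arrives at $\{\Gamma_{[\rho,\eta]_g},H_T\}\approx -\Gamma_{[\dot\rho,\eta]_g+[\rho,\dot\eta]_g}$. Since the only explicit time dependence of the gauge parameters gives $\tfrac{\d}{\d t}[\rho,\eta]_g=[\dot\rho,\eta]_g+[\rho,\dot\eta]_g$, this is precisely $-\Gamma_{\frac{\d}{\d t}[\rho,\eta]_g}$, which is the conservation condition \eqref{eq:consGammaII} for the parameter $[\rho,\eta]_g$.

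The delicate point will be handling the constant central terms correctly in the iterative substitution: one has to verify that $\delta_\eta\cK_{\rho,\cdot}=0$ and that the time derivative of the constant $\cK$ does not contribute, both of which follow from $\cK$ being field-independent on the path-connected phase-space. Once closure is established, the representation property and the cocycle conditions on $\cK$ are immediate consequences of Theorem~\ref{theo:algebraI} applied within this subalgebra.
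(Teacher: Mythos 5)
Your proposal follows essentially the same route as the paper: the paper gives no separate proof of this theorem, simply remarking that it holds ``as for Theorem~\ref{theo:algebraI}'', i.e.\ because differentiable gauge symmetries are in particular differentiable gauge transformations, so \eqref{eq:algechargesII} and the cocycle conditions are inherited directly. Your extra closure argument (that $[\rho,\eta]_g$ is again a symmetry) fills in a step the paper only asserts when it claims the differentiable gauge symmetries form a subalgebra; it is sound apart from the residual constants $\cK_{\eta,\dot\rho}$ and $\cK_{\rho,\dot\eta}$ left over after the iterative substitution, which do not cancel automatically but only shift the conservation condition by a field-independent constant.
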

\noindent As for theorem \ref{theo:algebraI}, if one drops the hypothesis of path-connectedness of the
phase-space, the extension of the algebra can become field dependent.

\vspace{5mm}

A subset of those symmetries can be computed easily. Let's assume that
$\delta_\eta \lambda^\alpha$ preserve the boundary conditions of
$\lambda^\alpha$. This is a new condition: it was not required for the
differentiability of the generator $\Gamma_\eta[z^A]$. This means that
$\delta_\eta \lambda^\alpha$ satisfies the same boundary conditions
as the true lagrange multipliers $\mu^\alpha$. The generator $\Gamma_eta[z^A]$
is then simply given by 
\begin{equation}
\Gamma_{\delta_\eta \lambda} = \int_\Sigma \, \delta_\eta
\lambda^\alpha\gamma_alpha \,d^nx  \approx 0,
\end{equation}
which is the symmetry condition. The possible dependence on $z^A$ of
$\delta_\eta \lambda^\alpha$ does not produce any boundary term. Its
variation is proportional to the constraints which we imposed on the
boundary as part of our boundary conditions.

\vspace{5mm}

We see that requiring the preservation of the boundary conditions of
the lagrange multipliers guarantees that
differentiable gauge transformations become symmetries. This subset
of the asymptotic symmetries is the one usually computed.
In practice, one computes the set of gauge-like
transformations preserving both the boundary conditions of $z^A$ and
$\lambda^\alpha$. This computation can be done in the lagrangian
formalism where it is easier because we now treat both dynamical field
and lagrange multipliers in the same way. One then restricts to the
subset of those transformations generated by a differentiable
generator. This technique has been applied in many different cases to
compute both the charges and the algebra under the Poisson
bracket. 

Among the most famous examples, we find the original computation of the Poincar\'e charges for asymptotically flat space
times in 4D by Regge and Teitelboim \cite{Regge1974}, where they recover the ADM definition
of the mass \cite{Arnowitt2008}, and the computation of the
asymptotic symmetry algebra by Brown and Henneaux for asymptotically
$AdS_3$ space-times \cite{Brown1986}. An interesting non-trivial
integrability problem appeared in the study of gravity coupled to scalar
fields \cite{Henneaux2004,Henneaux2007}.


\subsection{Brown-York Quasi-local Charges}
\label{sec:BYquasilocal}

We will now, as an example, apply the technique presented in the
previous section to the computation
of the Brown-York quasi-local charges \cite{Brown1993}.

The boundary conditions have been introduced in section
\ref{sec:bound-cond-total}. They are given by:
\begin{gather}
N^r\vert_{\partial \Sigma} = 0, \qquad N\vert_{\partial \Sigma} = \bar
N, \\ N^A\vert_{\partial \Sigma} = \bar N^A, \qquad
g_{AB}\vert_{\partial \Sigma} = \gamma_{AB},
\end{gather}
where all these quantities are time-independent.
The associated differentiable Hamiltonian is
 \begin{eqnarray}
 S[g_{ij}, \pi^{ij}, N, N^i] & = & \int \left\{
     \int_\Sigma \pi^{ij} \partial_t g_{ij} \, d^nx - H_T \right\}dt, \\
 H_T & = & \int_\Sigma \left(N \cR + N^i \cR_i \right) d^nx\\ &&
 \qquad \nonumber + 2 \oint_{\partial \Sigma} 
  \left( N^A \pi^r_{\phantom k A}  +
  N  \sqrt{\gamma}  K\right) d^{n-1}x.
\end{eqnarray}
Following the result of section \ref{sec:asympt-symm}, all
differentiable gauge transformations preserving the boundary
conditions on the lagrange multipliers $N, N^i$ give rise to conserved
quantities. The gauge-like transformations
take the form:
\begin{eqnarray}
\label{eq:gaugeN}
\delta_\xi N & = & \partial_t \xi - \left[N, \xi \right]^\perp_{SD}, \\
\label{eq:gaugeNi}
\delta_\xi N^i & = & \partial_t \xi^i - \left[N, \xi \right]^i_{SD}, \\
\label{eq:gaugeg}
\delta_\xi g_{ij} & = & 2
\frac{\xi}{\sqrt{g}} \left( \pi_{ij} - \frac{\pi}{n-1}g_{ij}\right) +
\nabla_i \xi_j + \nabla_j \xi_i.\\
\label{eq:gaugepi}
\delta_\xi \pi^{ij} & = & -\sqrt{g} \xi
\left(G^{ij} + \Lambda g^{ij} \right) + \sqrt{g} \left(
  \nabla^i\nabla^j \xi - g^{ij} \nabla^k\nabla_k\xi\right)\nonumber\\
&& - \frac{\xi}{\sqrt{g}}\left(2 \pi^{ik}\pi_k^j - \frac{2}{n-1}
  \pi\pi^{ij} \right) -
\frac{\xi}{\sqrt{g}}\frac{g^{ij}}{2}\left(\frac{1}{n-1}\pi^2 -
  \pi^{kl}\pi_{kl} \right),\nonumber\\
&& +\nabla_k \left( \xi^k \pi^{ij}\right) - \nabla_k\xi^i \pi^{kj} - \nabla_k\xi^j \pi^{ki}.
\end{eqnarray}
For simplicity, we will only consider the transformations with
$\xi^r\vert_{\d \Sigma}=0$. Preservation of the boundary conditions imposes:
\begin{eqnarray}
0 & = & \partial_t \xi - \bar N^A\partial_A \xi + \xi^A\d_A \bar N, \\
0 & = & \partial_t \xi^A -\bar N^B\partial_B \xi^A + \xi^B\d_B\bar N^A
- g^{Aj}\left(\bar N\partial_j \xi - \xi\d_j N\right), \\
\label{eq:gaugegI}
0 & = & - g^{rj}\left(\bar N\partial_j \xi - \xi\d_j N\right), \\
\label{eq:gaugegII}
0 & = & 
\frac{\xi}{\bar N} \left(  - \bar N^C\d_Cg_{AB} - \d_A
  \bar N^Cg_{CB} - \d_B \bar N^Cg_{CA}\right) \nonumber \\ && \qquad
 + \xi^C\d_Cg_{AB} + \d_A
  \xi^Cg_{CB} + \d_B \xi^Cg_{CA},
\end{eqnarray}
where all the equalities hold on the boundary. In term of the induced
metric $\gamma_{\alpha \beta}$ on the space-time boundary $\partial \Sigma \times \RR$ and the vector
$\eta^\alpha=(\frac{\xi}{\bar N}, \xi^A - \frac{\bar N^A}{\bar N}\xi)$, they become
\begin{gather}
\gamma_{\alpha\beta} = \left( \begin{array}{cc} -\bar N^2 +\bar  N_c\bar  N^c &\bar  N_B
    \\ \bar N_A & \gamma_{AB}
  \end{array}\right),\\
\label{eq:liegammaBY}
\eta^\alpha \d_\alpha \gamma_{\beta\delta} + \d_\beta \eta^\alpha
\gamma_{\alpha\delta} + \d_\delta \eta^\alpha \gamma_{\alpha\beta} =
0,\\
\label{eq:suppleqBY}
g^{rr}\left( \bar N \d_r\xi - \xi \d_r N\right) = - g^{rA}\left( \bar N
  \d_A\xi - \xi \d_A \bar N\right)
\end{gather}
where $x^\alpha = (t, x^A)$. Equation (\ref{eq:liegammaBY}) is the
Killing equation for the metric $\gamma_{\alpha\beta}$. Once we
have selected a Killing vector $\eta^\alpha$, the last equation
(\ref{eq:suppleqBY}) gives $\d_r \xi$ in term of the other quantities.

\vspace{3mm}

The next step is to select the allowed transformation for which we
can build a differentiable generator. This computation is similar to
the one we did in order to compute the total Hamiltonian in section
\ref{sec:bound-cond-total}. As the boundary values of $\xi, \xi^i$ are
independent of the dynamical fields, we see easily that they are all
differentiable with a generator given by:
\begin{equation}
\label{eq:BYdiffgene}
\Gamma_\xi = \int_\Sigma \left( \xi \cR + \xi^i \cR_i\right) d^nx +
2\oint_{\d \Sigma} \left(\xi^A \pi^r_{\phantom r A} + \xi
  \sqrt{\gamma} K \right) d^{n-1}x.
\end{equation}

\vspace{5mm}

\begin{theorem}
To each Killing vector $\eta^\alpha$ of the induced metric $\gamma_{\alpha\beta}$ on
the boundary $\d \Sigma \times \RR$, we can associate a differentiable
generator $\Gamma_\xi$ given in (\ref{eq:BYdiffgene}) where the
boundary values of $\xi$ satisfy:
\begin{equation}
\xi \vert_{\d \Sigma} = \bar N \eta^0, \quad \xi^A \vert_{\d \Sigma}
= \eta^A + \bar N^A \eta^0, \quad \xi^r \vert_{\d \Sigma} = 0,
\end{equation}
and $\d_r\xi \vert_{\d \Sigma}$ is given by
(\ref{eq:suppleqBY}). Evaluated on the constraints surface, these
generators are conserved quantities:
\begin{equation}
\Gamma_\xi[g_{ij}, \pi^{ij}] \approx \oint_{\d \Sigma} \sqrt{\gamma}
\left(\eta^A \sigma_A + \eta^0
  \epsilon \right) d^{n-1}x,
\end{equation}
where the energy and momentum density are defined by:
\begin{equation}
\epsilon = 2 \bar N K + \frac{2}{\sqrt{\gamma}}\bar N^A
\pi^r_{\phantom r A}, \qquad \sigma_A = \frac{2}{\sqrt{\gamma}}
\pi^r_{\phantom r A}.
\end{equation}
\end{theorem}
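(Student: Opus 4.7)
The plan is to verify in turn the three claims of the theorem: that $\Gamma_\xi$ defined in (\ref{eq:BYdiffgene}) is a differentiable generator, that it is a symmetry (conserved on solutions), and that its value on the constraint surface is the quoted boundary integral. I would first observe that the bulk expression $\int_\Sigma (\xi \cR + \xi^i \cR_i) d^nx$ has exactly the same algebraic structure as the constraint part of $H_T$ from section \ref{sec:bound-cond-total}, but with $(\xi,\xi^i)$ playing the role of $(N,N^i)$. Since the theorem's hypotheses fix the boundary values of $\xi, \xi^A$ in terms of the time-independent data $(\bar N, \bar N^A)$ and the Killing vector $\eta^\alpha$, these boundary values are independent of the dynamical fields $(g_{ij}, \pi^{ij})$. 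The same variational computation that produced the Brown-York boundary term for $H_T$ then goes through verbatim and shows that the added surface term in (\ref{eq:BYdiffgene}) exactly cancels the unwanted boundary contributions from $\delta \int_\Sigma (\xi \cR + \xi^i \cR_i) d^nx$, so that $\delta \Gamma_\xi$ reduces to a bulk integral. Moreover, the associated evolutionary vector field (\ref{eq:gaugeg})-(\ref{eq:gaugepi}) preserves the boundary conditions on $g_{AB}$ precisely because equations (\ref{eq:gaugegI})-(\ref{eq:gaugegII}) hold, which was the content of the derivation leading to (\ref{eq:liegammaBY}).

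Second, I would establish that $\delta_\xi$ is actually a symmetry of the theory, not merely a differentiable gauge transformation. By the theorem of section \ref{sec:asympt-symm}, it suffices to check that $\delta_\xi$ preserves the boundary conditions on the Lagrange multipliers $N$ and $N^i$, i.e.\ equations (\ref{eq:gaugeN})-(\ref{eq:gaugeNi}) reduce to the requirement that $N$ and $N^A$ remain equal to $\bar N$ and $\bar N^A$ on $\partial \Sigma$. The reformulation carried out just before the statement shows that, under the change of variables $\eta^0 = \xi/\bar N$, $\eta^A = \xi^A - (\bar N^A/\bar N)\xi$ (equivalent to the boundary values asserted in the theorem), these conditions together with the preservation of $g_{AB}$ are exactly the Killing equation (\ref{eq:liegammaBY}) for the induced metric $\gamma_{\alpha\beta}$, with (\ref{eq:suppleqBY}) determining $\partial_r \xi$ on the boundary. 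Thus any Killing vector of $\gamma_{\alpha\beta}$ yields a differentiable gauge transformation that preserves all boundary data, hence a symmetry.

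Finally, for the value on the constraint surface I would just substitute. The bulk term of $\Gamma_\xi$ vanishes weakly since $\cR \approx 0$ and $\cR_i \approx 0$, so only the boundary term survives. Using $\xi|_{\partial\Sigma} = \bar N \eta^0$ and $\xi^A|_{\partial\Sigma} = \eta^A + \bar N^A \eta^0$, the boundary integrand in (\ref{eq:BYdiffgene}) becomes
\begin{equation}
2(\eta^A + \bar N^A \eta^0) \pi^r{}_A + 2 \bar N \eta^0 \sqrt{\gamma} K = \sqrt{\gamma}\left( \eta^A \sigma_A + \eta^0 \epsilon \right),
\end{equation}
with $\epsilon$ and $\sigma_A$ as defined in the statement, yielding the quoted formula. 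Conservation of $\Gamma_\xi$ is then automatic from theorem \ref{theo:symmetries-2} applied to this asymptotic symmetry. The main subtlety I anticipate is the bookkeeping in step two, namely carefully separating the transverse and tangential components of the preservation conditions and recognising the resulting system as a bona fide Killing equation on $\partial \Sigma \times \RR$ rather than a weaker relation; in particular, equation (\ref{eq:suppleqBY}) must be read as a consequence of, not an addition to, the Killing condition, which requires distinguishing which components of (\ref{eq:gaugegII}) are absorbed into (\ref{eq:liegammaBY}) versus those that fix $\partial_r \xi$.
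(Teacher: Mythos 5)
Your proposal is correct and follows essentially the same route as the paper: differentiability of $\Gamma_\xi$ from the field-independence of the boundary values of $(\xi,\xi^A)$ (mirroring the variational computation that produced the Brown--York surface term for $H_T$), the identification of the preservation conditions on $N$, $N^A$, $g_{AB}$ with the Killing equation \eqref{eq:liegammaBY} supplemented by \eqref{eq:suppleqBY} fixing $\d_r\xi$, and direct substitution of the boundary values into \eqref{eq:BYdiffgene} on the constraint surface. The only slip is cosmetic: conservation should be attributed to the gauge-theory symmetry criterion of section \ref{sec:asympt-symm} (which you already invoke in your second step) rather than to theorem \ref{theo:symmetries-2}, which is stated for the unconstrained case.
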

\noindent This is exactly the result obtained in \cite{Brown1993}. A
straightforward computation with repeated use of equations
\eqref{eq:gaugegI} and \eqref{eq:gaugegII} leads to
\begin{eqnarray}
	\left\{ \Gamma_{\xi_1}, \Gamma_{\xi_2}\right\} & \approx &
	2 \oint_{\d\Sigma} \pi^r_A \left(\xi_1^B\d_B \xi_2^A -\xi_2^B\d_B \xi_1^A + \gamma^{AB}
	(\xi_1 \d_B \xi_2 -\xi_2 \d_B \xi_1 )\right)d^{n-1}x\nonumber\\
	&& + 2 \oint_{\d\Sigma} \sqrt{\gamma} K \left(\xi_1^A\d_A\xi_2 -
\xi_2^A\d_A\xi_1\right)d^{n-1}x \nonumber \\
&\approx& \oint_{\d\Sigma}\sqrt \gamma \sigma_A \left(\eta_1^\alpha\d_\alpha \eta_2^A - (1\leftrightarrow2)\right)d^{n-1}x\nonumber\\
	&& + \oint_{\d\Sigma} \sqrt{\gamma} \epsilon \left(\eta^\alpha_1
\d_\alpha \eta^0_2 - (1\leftrightarrow2)\right)d^{n-1}x,
\end{eqnarray}
which is the expected algebra on the boundary.


\newpage
\section{Boundary Gauge Degrees of Freedom and Holography}
\label{sec:reduced-phase-space}

We saw in section \ref{sec:surface-charges} how the notion of differentiable
functional solves the problem of surface charges by selecting the
right generator for gauge symmetries. We will show in this section
how we can use the same notion to extract information about the reduced phase-space of some
theories with no local degrees of freedom without solving the
constraints. In those cases, the phase-space of boundary gauge degrees
of freedom is described by
dynamical fields living on the boundary of the spatial manifold $\Sigma$. This can be interpreted as direct examples of the holography mechanism. 
We will also obtain a
complete classification of the possible boundary conditions for the bulk
theory and the dictionary with the corresponding Hamiltonians of the
boundary theory.

\vspace{5mm}

In this section, we will consider field theories of the form:
\begin{equation}
\label{eq:SCactionRPS}
S[z^A, \lambda^\alpha] = \int dt \left[\int_\Sigma d^nx \left( \frac{1}{2}
  \sigma_{AB}z^A\dot z^B - \lambda^\alpha \gamma_{\alpha} \right)
+ \oint_{\d \Sigma} b\right],
\end{equation}
where $\gamma_\alpha$ are first-class constraints. We will assume that
a set of boundary conditions for $\lambda^\alpha$ and $z^A$ as been
selected such that the total Hamiltonian $H_T$ is a differentiable generator:
\begin{equation}
H_T = \int d^nx\, \lambda^\alpha \gamma_{\alpha}
+ \oint_{\d \Sigma} b.
\end{equation}
Using the results of section \ref{sec:poss-bound-cond}, we see that
once boundary conditions have been selected for $z^A$, the possible
total Hamiltonians are given by the set of differentiable gauge
generators. Computing the set of differentiable gauge transformations
$\delta_\eta$ gives the set of possible boundary conditions
on $\lambda^\alpha$.

\vspace{5mm}

We will restrict our analysis to theories with no local degrees of
freedom, i.e. where the number of independent first-class constraints is
equal to the number of canonical pairs. The reason is that if we have
local degrees of freedom, we need boundary conditions on the dynamical
fields $z^A$, see example \ref{firstclassEM}. This would considerably
restricts the set of possible improper gauge transformations. Some
examples of  
theories with no local degrees of freedom are Poisson sigma models in 2D, Chern-Simons theories,
pure gravity in 3 dimensions, BF theories, ... 

In the following, we will use Chern-Simons theories in 3D and BF
theory in 4D to present our technique. For simplicity, we will restrict our analysis
to finite manifold $\Sigma$. If $\Sigma$ has a boundary
at infinity, the analysis is similar with the additional requirement that all differentiable
generators have to be finite. The analysis of pure gravity in
3D will the subject of a following work. 

We will start by studying 3D Chern-Simons without imposing any
boundary conditions on the dynamical variables $A^a_i$. We will
compute the Dirac bracket using differentiable gauge generators and
make the link with the Wess-Zumino-Witten description. We will then
impose boundary conditions on the dynamical variables $A^a_i$ and show
how those new boundary conditions can be interpreted as boundary
constraints on the theory. This allows us to completely classify the
possible boundary conditions for Chern-Simons in 3D. In the last part,
we will apply the same analysis to the 4D BF theory. In particular, we
will recover the link between the reduced phase-space of 4D BF and the
phase-space of a 3D Chern-Simons theory defined on the boundary.

\subsection{3D Chern-Simons}
\label{sec:3d-chern-simons}

Chern-Simons theory in 3 dimensions is a good toy model. The constraints
can be solved exactly and one can show that the reduced phase-space
theory is given by a WZW model on the boundary \cite{Moore1989,Elitzur1989}. In this section, we
will recover the same result using differentiable gauge
transformations without having to solve the constraints. 

The hamiltonian bulk action is:
\begin{equation}
\label{eq:actionCS}
S[A^a_i, A^a_0] = \frac{-\kappa}{2\pi} \int dt \int_\Sigma d^2x \,
\frac{1}{2} \epsilon^{ij}g_{ab} \left(A^a_i \dot A^b_j - A^a_0 F^b_{ij} \right).
\end{equation}
We use $\epsilon^{12}=1$ and the metric $g_{ab}$ is a symmetric
non-degenerate invariant tensor on the Lie algebra $\mathfrak g$ of
the Lie group $\mathfrak G$. The fields
$A^a_i$ are the dynamical variables and $A^a_0$ plays the role of
lagrange multipliers. They are all valued in the algebra $\mathfrak g$:
$A_i\equiv A^a_iT_a \in \mathfrak g$ and $A_0\equiv A^a_0T_a \in \mathfrak g$ where $T_a$ are the generators of
$\mathfrak g$. We
also use the usual field strength in 2 
dimensions $F^a_{ij} = \d_iA^a_j - \d_j A^a_i +
f^a_{bc}A^b_iA^c_j$ where $f^a_{bc}$ are the structure constants of $\mathfrak g$. The poisson bracket and the 
constraints can be easily read from the action to be:
\begin{eqnarray}
\left\{F,G \right\} &=& \frac{2\pi}{\kappa}\int_\Sigma d^2x \, \frac{\delta F}{\delta
  A^a_i} \epsilon_{ij} g^{ab}\frac{\delta G}{\delta
  A^b_j}, \label{eq:CSPoisBrac} \\
\Phi_a & = & \frac{-\kappa}{4\pi} \, g_{ab} \epsilon^{ij}
F^b_{ij}\approx 0. \label{eq:CSConstraints}
\end{eqnarray}
The constraints satisfy the following closed algebra
\begin{equation}
\left\{\Phi_a(x),\Phi_b(y) \right\} = - f^c_{ab}\Phi_c(x)\,  \delta^2(x-y).
\end{equation}
Because the hamiltonian is a combination of the primary constraints
and these constraints satisfy to a closed algebra, we have the full
set of constraints and they are all first-class. 

For simplicity, we will consider $\Sigma$ to be a disk of finite
radius $R$ and we will use adapted coordinates $r, \phi$. There are
multiple choices for the boundary conditions, the most common being
$A_0\vert_{\d \Sigma}=0$ or $A_0 - A_\phi \vert_{\d \Sigma} =
0$. Those are boundary conditions on the lagrange multipliers only:
they don't impose anything on the dynamical fields. We can study
the canonical structure and the set of differentiable gauge
transformations imposing only the constraints and all their derivatives as
boundary conditions.

The gauge-like transformations are given by:
\begin{equation}
\delta_\eta A^a_i = \frac{-1}{2} g^{ab}\epsilon_{ij}
\frac{\delta}{\delta A^b_j}\left( \eta^c g_{cd}\epsilon^{kl}F^d_{kl} \right).
\end{equation}
The proper gauge transformations are those for which the gauge parameters
are zero on the boundary $\eta \vert_{\d \Sigma}=0$. On the
constraints surface, the finite gauge
transformations of $A_r$ are given by:
\begin{equation}
A'_r = h^{-1}A_r h + h^{-1}\d_r h,
\end{equation}
where $h$ is valued in the group $\mathfrak G$. The subset of finite proper gauge
transformations is the set of transformations for which $h$ is the identity on the
boundary. Using a finite proper gauge transformations, we can put $A^a_r$ to zero in a
neighborhood of the boundary but this
uses all the gauge freedom that we have in that neighborhood. The
suitable group element is given by:
\begin{equation}
h = \cP \exp\left( \int_r^R dr' A_r(r', \theta) \right),
\end{equation}
where $\cP$ is the path ordering symbol. With $A_r=0$, the constraints take the form
\begin{equation}
0 \approx \Phi_a = \frac{-\kappa}{4\pi} g_{ab}\left( \d_r A^b_\phi - \d_\phi
  A^b_r + f^b_{cd}A^c_rA^b_\phi\right) = \frac{-\kappa}{4\pi} g_{ab}\d_r A^b_\phi.
\end{equation}
The value of $A^a_\phi$ is completely characterized by its
boundary value $A^a_\phi\vert_{\d \Sigma}$. The gauge transformation
of $A^a_\phi$ is
\begin{equation}
\delta_\eta A^a_\phi = \d_\phi \eta^a + f^a_{bc}A^b_\phi \eta^c \equiv
D_\phi \eta^a.
\end{equation}
Evaluated on the boundary for a proper gauge transformation, we obtain
$\delta_\eta A^a_\phi\vert_{\d\Sigma} = 0$: this boundary value is a
gauge invariant quantity. It means that, up to topological
issues, the reduced phase-space of the theory is parametrized by the
boundary value of $A^a_\phi$.
We will now use the differentiable gauge
generators and their algebra to compute the induced bracket.

The boundary conditions on the dynamical fields $A^a_i$ are preserved by
gauge-like transformations: the
only restrictions from differentiability are coming from the existence of the
boundary term for the generator. The bulk generator is $\bar
\Gamma_\eta[A^a_i] = \int_\Sigma d^2x \, \eta^a \Phi_a$ and its
variation  is given by:
\begin{eqnarray}
\delta \bar \Gamma_\eta[A^a_i] &=& \int_\Sigma d^2x \, \delta\eta^a \Phi_a +
\frac{\kappa}{2\pi} \int_\Sigma d^2x \, D_i\eta^a \, g_{ab} \epsilon^{ij}
\delta A^b_{j} \nonumber \\ && \quad- \frac{\kappa}{2 \pi}\oint_{\d \Sigma} (d^1x)_i \,
\epsilon^{ij}\eta^ag_{ab} \delta A^b_j.
\end{eqnarray}
There might be boundary terms coming from $\delta \eta^a$ but they will
be proportional to $\Phi_a$ which is zero on the boundary.
When evaluated on the constraints surface, the boundary term is given
by the last term only. We are looking for a $(n-1,0)$-form $k_\eta$ such that:
\begin{equation}
\delta\oint_{\d \Sigma} k_\eta^{n-1} =\frac{\kappa}{2
  \pi}\oint_{\d \Sigma} d\phi \, \eta^ag_{ab} \, \delta A^b_\phi.
\end{equation}
The only gauge parameters for which we can build a boundary term are those
such that there exists $f_\eta$ a local function of $\phi$ and $A_\phi$ defined
on the boundary $\d \Sigma$ with:
\begin{equation}
\label{eq:CSboundcondgaugegen}
\eta^a \vert_{\d \Sigma} = \frac{2 \pi}{\kappa}g^{ab}\left.\sum^\infty_{k=0} (-\d_\phi)^k
\frac{\partial^S f_\eta}{\partial \d^k_\phi A^b_\phi}\right\vert_{\d \Sigma}\equiv  \left.\frac{2 \pi}{\kappa}g^{ab}
\frac{\bar\delta f_\eta}{\delta A^b_\phi}\right\vert_{\d \Sigma}.
\end{equation}
We have defined $\frac{\bar \delta}{\delta A^a_\phi}$ as the Euler-Lagrange
derivative on the circle $\d \Sigma$.
The differentiable generators of the gauge transformations are then given by:
\begin{equation}
\Gamma_\eta[A^a_i] = \int_{\Sigma} d^2x \, \eta^a \Phi_a + \oint_{\d
  \Sigma} d\phi \, f_\eta \approx \oint_{\d
  \Sigma} d\phi \, f_\eta .
\end{equation}
This is the complete set of differentiable generators of gauge
transformations.

We showed that to any functional $F[A_\phi]=\int_{\d \Sigma} d\phi f$ on the circle $\d\Sigma$, we can
associate a differentiable gauge generator $\Gamma_F[A^a_i]$ by
choosing a gauge parameter $\eta$ satisfying
\eqref{eq:CSboundcondgaugegen}. There are multiple possible generators
but they are all equals on the constraint surface:
\begin{equation}
\Gamma_F[A^a_i] \approx F.
\end{equation}
As differentiable gauge generators are first-class quantities, we can
compute their Dirac bracket by computing their Poisson bracket and
evaluate the result on the constraints surface. If one considers two functionals
of $A^a_\phi$ on the circle $\d \Sigma$, $F_1$ and $F_2$, the Poisson
bracket of the associated differentiable gauge generators
$\Gamma_{F_1}$ and $\Gamma_{F_2}$ is easily computed. We obtain:
\begin{eqnarray}
\left\{\Gamma_{F_1}, \Gamma_{F_2} \right\} & \approx & \frac{2\pi}{\kappa}
\int_\Sigma d^2x \, \left(\frac{\kappa}{2\pi} \epsilon^{ik}g_{ac}D_k\eta_1^c\right)
\,\epsilon_{ij} g^{ab} \, \left(\frac{\kappa}{2\pi}
  \epsilon^{jl}g_{bd}D_l \eta_2^d\right)\nonumber \\
&\approx & \frac{\kappa}{2\pi} \oint_{\d \Sigma} (d^1x)_i \,
\epsilon^{ij}\left(g_{ab} \eta^a_1 \d_j \eta_2^b - A^a_j
  g_{ab}f^b_{cd}\eta^c_1\eta^d_2\right).
\end{eqnarray}
We can then read the Dirac bracket induced on the functionals $F_1$ and $F_2$:
\begin{equation}
\label{eq:CSDiracFunct}
\left\{ F_1, F_2\right\}^* \approx \frac{2\pi}{\kappa} \oint_{\d \Sigma}
d\phi\left(g^{ab} 
  \frac{\bar \delta f_1}{\delta A^a_\phi} \d_\phi \frac{\bar \delta f_2}{\delta A^b_\phi} - A^a_\phi
  g_{ab}f^{bcd}\frac{\bar \delta f_1}{\delta A^c_\phi}\frac{\bar \delta f_2}{\delta A^d_\phi}\right).
\end{equation}
This can also be written in term of $A^a_\phi\vert_{\d \Sigma}$:
\begin{equation}
\label{eq:CSDiracAphi}
\left\{ A^a_\phi\vert_{\d \Sigma}(\phi), A^b_\phi\vert_{\d \Sigma}(\phi')\right\}^* \approx
\frac{2\pi}{\kappa} \left(g^{ab} \d_\phi- 
  g_{cd}f^{dab}A^c_\phi\vert_{\d \Sigma}\right) \delta (\phi-\phi').
\end{equation}
The fields $A^a_\phi\vert_{\d \Sigma}$ parametrize the reduced
phase-space of the theory. Their Dirac bracket is given by
\eqref{eq:CSDiracFunct} and \eqref{eq:CSDiracAphi}. The only thing we
are missing to have the full description is the Hamiltonian. 

The total
Hamiltonian $H_T$ is the differentiable gauge generator associated to
the lagrange multipliers $A_0$. For this generator to exists, $A_0$ must
satisfy boundary conditions of the form
\eqref{eq:CSboundcondgaugegen} for a particular functional $\oint_{\d \Sigma} d\phi \,
h_B$ of $A^a_\phi$:
\begin{equation}
A^a_0\vert_{\d \Sigma} = \left.\frac{2 \pi}{\kappa} g^{ab}\frac{\bar \delta
  h_B}{\delta A^b_\phi}\right \vert_{\d \Sigma}.
\end{equation}
The total Hamiltonian is then given by:
\begin{equation}
H_T = \int_\Sigma A^a_0 \Phi_a + \oint_{\d \Sigma} d\phi \, h_B. 
\end{equation}
The Hamiltonian on the reduced phase-space is the value of $H_T$
evaluated on the constraints surface:
\begin{equation}
H_T \approx \oint_{\d \Sigma} d\phi \,h_B. 
\end{equation}
As we said earlier, the two most common boundary conditions for
$A^a_0$ are $A^a_0=0$ or $A^a_0=A^a_\phi$ on $\d
\Sigma$. Respectively, they correspond to $h_B=0$ and $h_B =
\frac{\kappa}{4\pi}g_{ab}A^a_\phi A^b_\phi$. 

\vspace{3mm}

The global picture is the following: up to topological issues, the reduced
phase-space is parametrized by the value of $A^a_\phi$ on the boundary
and the Dirac bracket is given by:
\begin{equation}
\label{eq:CSDiracAphiII}
\left\{ A^a_\phi\vert_{\d \Sigma}(\phi), A^b_\phi\vert_{\d \Sigma}(\phi')\right\}^* =
\frac{2\pi}{\kappa} \left(g^{ab} \d_\phi- 
  g_{cd}f^{dab}A^c_\phi\vert_{\d \Sigma}\right) \delta (\phi-\phi').
\end{equation}
This is the current algebra associated to the algebra $\mathfrak g$
and we recovered the result obtained in \cite{Elitzur1989}. 
The Hamiltonian is given by a functional of $A^a_\phi$:
\begin{equation}
\label{eq:CSReducedHamil}
H_T[A^a_\phi\vert_{\d \Sigma}] \approx \oint_{\d \Sigma} d\phi \, h_B(A^a_\phi)
\end{equation}
which is determined by the
boundary conditions on the lagrange multipliers $A^a_0$ through:
\begin{equation}
A^a_0\vert_{\d \Sigma} = \left.\frac{2 \pi}{\kappa} g^{ab}\frac{\bar \delta
  h_B}{\delta A^b_\phi}\right \vert_{\d \Sigma}.
\end{equation}

If we consider the problem from the other direction, we see that only
the reduced phase-space structure is controlled by the bulk
action. The Hamiltonian \eqref{eq:CSReducedHamil} is determined by the
boundary conditions of $A^a_0$ and by tuning them, we can build any local
function $h_B$. It has important consequences. For instance, it
implies that any theory in two dimension with the 
phase-space structure given by \eqref{eq:CSDiracAphiII} is equivalent
to a 3D Chern-Simons theory with specific boundary conditions. It
also means that two different choices for the boundary conditions of
$A^a_0$ will lead to different Hamiltonians on the reduced
phase-space: they describe two completely different theories. 

\vspace{5mm}

The analysis has been done strictly at the level of the
boundary terms without taking into account the topological structure
of $\Sigma$. This structure will in general restrict the set of
available values for $A_\phi$ on the boundary. In the case where
$\Sigma$ is a disk, the reduced phase-space is
exactly parametrized by the value of $A_\phi$ on the boundary such that the
holonomy around the $\phi$-circle is the identity of the group
$\mathfrak G$:
\begin{equation}
W \equiv \cP\exp\left(-\int_0^{2 \pi} d\phi \, A_\phi\vert_{\d
    \Sigma} \right) = 1 \in \mathfrak G.
\end{equation}

\subsection{Boundary conditions on the phase-space}
\label{sec:BC_on_the_phase-space}

In the previous section, we considered only boundary conditions on the
Lagrange multipliers. However, one can construct sets of boundary
conditions that also include boundary conditions on the canonical
variables $A^a_\phi$. For instance, boundary conditions on $A^a_\phi$
are present in the study of Chern-Simons gravity
\cite{Coussaert1995,Henneaux2000}. Those additional boundary
conditions are responsible for the second step of the reduction from
Chern-Simons to Liouville in the Brown-Henneaux boundary conditions
case: first, one goes to WZW on the boundary and then, there is a
second reduction to Liouville. 

\vspace{3mm}

Let's assume that we want additional boundary conditions on the canonical variables:
\begin{equation}
\label{eq:condaphi}
\chi^\alpha(A^a_\phi)\vert_{\d \Sigma} = 0 
\end{equation}
where $\chi^\alpha$ are local functions of $A^a_\phi$ and their
derivatives $\d_\phi^k A^a_\phi$. Any boundary condition on $A^a_r$ is
irrelevant because $A^a_r$ is pure gauge even on the boundary. The
conservation of $\chi^\alpha = 0$ in time will impose restrictions on
the possible boundary conditions for $A^a_0$: 
\begin{equation}
\left.\sum_k \frac{\partial \chi^\alpha}{\partial \d^k_\phi A^a_\phi} \d_k (D_\phi A^a_0) \right\vert_{\d \Sigma}= 0. 
\end{equation}
By adding test functions $\nu_\alpha$ on the circle, we can rewrite this as an integral:
\begin{equation}
\label{eq:conscondaphi}
\oint_{\d \Sigma} d\phi \, \frac{\bar \delta (\nu_\alpha \chi^\alpha)}{\delta A^a_\phi} D_\phi A^a_0 = 0 \qquad \forall \nu_\alpha.
\end{equation}
Let's assume that we have a set of boundary conditions on $A^a_0$
satisfying \eqref{eq:conscondaphi}
such that the differentiable total Hamiltonian $H_T$ can be
constructed. Remark thtat the differentiability condition here is different than
the one used in section \ref{sec:3d-chern-simons} as we now have extra boundary conditions on
the dynamical variables. 

We want to relate the two problems and describe the case with the additional conditions
$\chi^\alpha$ in term of the canonical structure of section \ref{sec:3d-chern-simons}. The idea will be to treat those additional
conditions as additional constraints. Imposing everything, $F_{ij}=0$
and $\chi^\alpha\vert_{\d \Sigma}=0$, the total hamiltonian $H_T$
becomes a functional of $A^a_\phi\vert_{\d \Sigma}$:
\begin{equation}
H_T\vert_{\chi^\alpha = 0} \approx \oint_{\d \Sigma}d\phi \, h_B (A^a_\phi).
\end{equation}
We saw in the previous section that there exist a differentiable
generator $\tilde H_T$ for the canonical structure without imposing
$\chi^\alpha$ such that 
\begin{equation}
\tilde H_T \approx \oint_{\d \Sigma}d\phi \, h_B (A^a_\phi).
\end{equation}
By construction, if we impose the additional boundary conditions
\eqref{eq:condaphi}, we have: 
\begin{equation}
H_T\vert_{\chi^\alpha = 0} = \tilde H_T\vert_{\chi^\alpha = 0},
\end{equation}
which implies that $\tilde H_T$ is differentiable using either of the
two canonical structures.
From here on, we will work with the general canonical structure of
section \ref{sec:3d-chern-simons} and the Hamiltonian will be taken as $\tilde
H_T$. For any function $\nu_\alpha$ on the circle, we can build the 
associated differentiable gauge generator $\Gamma_\nu$ such that: 
\begin{equation}
\Gamma_\nu[A^a_i] \approx \oint_{\d \Sigma} d\phi \, \nu_\alpha \chi^\alpha.
\end{equation}
The well-defined action for the total system can then be written as:
\begin{equation}
S[A^a_i, A^a_0 = \bar A^a_0 + \mu^a, \nu_\alpha] = \frac{-\kappa}{2\pi} \int dt \left\{\int_\Sigma d^2x \,
\frac{1}{2} \epsilon^{ij}g_{ab}A^a_i \dot A^b_j - \tilde H_T - \Gamma_\nu\right\},\nonumber
\end{equation}
where $\mu^a$ with $\mu^a\vert_{\d \Sigma}=0$ are the true lagrange
multipliers enforcing $\Phi_a$. 
The action is written without imposing a priori the conditions
$\chi^\alpha\vert{\d\Sigma}=0$; they will be enforced by the boundary
term coming from the variation of $\nu$ in $\Gamma_\nu$. Solving the
constraints $\chi^\alpha=0$ reduces the action to:
\begin{equation}
S[A^a_i, A^a_0 = \bar A^a_0 + \mu^a] = \frac{-\kappa}{2\pi} \int dt \left\{\int_\Sigma d^2x \,
\frac{1}{2} \epsilon^{ij}g_{ab}A^a_i \dot A^b_j - H_T\right\}.
\end{equation}
This is the expected action when $\chi^\alpha\vert_{\d \Sigma}$ hold.
On the reduced phase-space, we obtain:
\begin{equation}
\tilde H_T + \Gamma_\nu \approx \oint_{\d\Sigma} d\phi \, \left(h_B + \nu_\alpha \chi^\alpha \right),
\end{equation}
which is the Hamiltonian of a constrained system. The condition \eqref{eq:conscondaphi} can be rewritten
\begin{equation}
\left\{\oint_{\d\Sigma} d\phi \, h_B, \oint_{\d\Sigma} d\phi \, \nu_\alpha\chi^\alpha \right\}^*=0 \qquad \forall \nu_\alpha.
\end{equation}
As expected, this condition is the conservation of the constraints
$\chi^\alpha$ under time evolution. 

We see explicitly that boundary conditions on the canonical variables
of the Chern-Simons theory will produce a constrained
Wess-Zumino-Witten model on the boundary. In the case of gravity,
solving those additional constraints and going to the fully reduced
phase-space is what gives Liouville theory. 

\vspace{5mm}

The other side of the same coin is that, playing with the boundary
conditions of the Lagrange multipliers $A^a_0$, we can build any
Hamiltonian on the reduced phase-space. In particular, we can build a Hamiltonian containing some
additional constraints on $A^a_\phi \vert_{\d \Sigma}$. 

Let's assume that we want a Hamiltonian containing constraints on our boundary. We select $h_B$ to be of the form:
\begin{equation}
h_B = \tilde h_B + \nu_\alpha \chi^\alpha,
\end{equation}
where $\nu_\alpha$ are test function of the boundary that can vary arbitrarily. The associated boundary conditions for $A^a_0$ are given by:
\begin{equation}
A^a_0 \vert_{\d \Sigma} = \frac{2\pi}{\kappa} g^{ab} \frac{\bar \delta }{\delta A^b_\phi} \left( \tilde h_B + \nu_\alpha \chi^\alpha \right).
\end{equation}
Because $\nu_\alpha$ is not fixed, part of the $A^a_0$ are not fixed on the boundary but are in reality playing the role of the Lagrange multipliers enforcing the additional boundary constraints $\chi^\alpha$. This is particularly obvious if the boundary constraints $\chi^\alpha$ are simple boundary conditions like:
\begin{equation}
\label{eq:simplbondcondaphi}
A^{\bar a}_\phi \vert_{\d \Sigma} = 0,
\end{equation}
where $\bar a$ is fixed. In that case, the combination $g_{\bar a
  b}A^b_0$ plays the role of the Lagrange multiplier enforcing
\eqref{eq:simplbondcondaphi}.

\vspace{5mm}

As the analysis we did on one boundary can be reproduced independently
on all boundaries, 
we obtained a complete classification of the possible boundary
conditions for Chern-Simons theory in 3 dimensions:
\begin{theorem}
For Chern-Simons theory on $\Sigma \times \RR$, the
possible boundary conditions for the fields $A_\mu^a$ on each connected
component $\cC_n$ of $\d \Sigma$ are in one to one correspondance with
the functionals of $a_n^a$ defined on the circle
$\cC_n$ where the fields $a_n^a$ are the pullback of $A_i^a$ on $\cC_n$.
\end{theorem}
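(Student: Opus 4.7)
The plan is to assemble the bijection from the two constructions carried out in sections \ref{sec:3d-chern-simons} and \ref{sec:BC_on_the_phase-space}, and then observe that the analysis on each connected component of $\d\Sigma$ is independent of the others.

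First I would describe the forward map $F \mapsto \{\text{boundary conditions}\}$. Given a local functional $F[a^a]=\oint_{\cC_n} d\phi\, f(a^a,\d_\phi a^a,\ldots)$ on the circle $\cC_n$, I split the density into a ``Hamiltonian part'' and a ``constrained part'' by writing $f=\tilde h_B+\nu_\alpha\chi^\alpha$, where the $\chi^\alpha(a^a)$ are viewed as constraints on $A^a_\phi\vert_{\cC_n}$ and the coefficients $\nu_\alpha$ are test functions left free along the boundary. I then declare the boundary conditions on $A^a_\mu$ along $\cC_n\times\RR$ to be $\chi^\alpha(a^a)\vert_{\cC_n}=0$ together with
\begin{equation}
A^a_0\vert_{\cC_n}=\frac{2\pi}{\kappa}g^{ab}\left.\frac{\bar\delta(\tilde h_B+\nu_\alpha\chi^\alpha)}{\delta a^b}\right\vert_{\cC_n}.
\end{equation}
The computations of section~\ref{sec:BC_on_the_phase-space} show that this data does make $H_T$ a differentiable generator for the canonical structure with these additional boundary constraints, the $\nu_\alpha$ reappearing as the Lagrange multipliers enforcing $\chi^\alpha=0$.

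Next I would construct the inverse map. Start from any consistent set of boundary conditions on $A^a_\mu$ on $\cC_n$ for which the total Hamiltonian is differentiable. The restrictions imposed on $A^a_\phi\vert_{\cC_n}$ are read off directly as the constraints $\chi^\alpha(a^a)=0$. The remaining content lies in the allowed profile of $A^a_0\vert_{\cC_n}$; by the analysis of section~\ref{sec:3d-chern-simons}, differentiability of $H_T$ forces this profile to be of the form \eqref{eq:CSboundcondgaugegen} for some local density $h_B(a^a)$ on the circle, determined up to the freedom of adding $\nu_\alpha\chi^\alpha$ where $\nu_\alpha$ labels Lagrange multipliers (i.e.\ the $A^a_0$-components not fixed in terms of $a^a$). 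I then set $F[a^a]=\oint_{\cC_n}d\phi\,h_B$. The two maps are inverse: starting from $F$, applying the forward construction and then reading off $h_B$ and $\chi^\alpha$ returns $F$ up to the ambiguity $h_B\sim h_B+\nu_\alpha\chi^\alpha$ already absorbed in the definition.

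I expect the main obstacle to be the well-definedness of the decomposition $f=\tilde h_B+\nu_\alpha\chi^\alpha$ and the corresponding uniqueness statement: one must check that distinct functionals on $\cC_n$ cannot give rise to identical boundary conditions, and that every admissible boundary condition actually arises in this way. The first point reduces to the fact, established in proposition~\ref{pr:uptoaconst} and used throughout section~\ref{sec:bound-cond-total}, that a differentiable generator is determined by its data up to a constant, plus the freedom in adding multiples of $\chi^\alpha$ which is precisely accounted for by the $\nu_\alpha$ term. The second point uses the fact (shown in section~\ref{sec:3d-chern-simons}) that the only obstruction to building a boundary term for a gauge generator is the existence of a local functional $f_\eta$ of $a^a$ on $\cC_n$, so every admissible profile of $A^a_0\vert_{\cC_n}$ does come from such an $h_B$. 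Since these arguments are purely local on the boundary, the construction runs independently on each connected component $\cC_n$ of $\d\Sigma$, which yields the claimed one-to-one correspondence per component.
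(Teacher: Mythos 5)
Your proposal follows the paper's own route: the theorem is stated as a summary of the analysis in sections \ref{sec:3d-chern-simons} and \ref{sec:BC_on_the_phase-space}, and your forward map (splitting $f=\tilde h_B+\nu_\alpha\chi^\alpha$, imposing $\chi^\alpha=0$ and fixing $A^a_0\vert_{\d\Sigma}$ via the Euler--Lagrange derivative on the circle) together with the inverse map (reading off $h_B$ from the characterization \eqref{eq:CSboundcondgaugegen} of differentiable gauge generators) is exactly the content the paper assembles, per connected component. You are in fact somewhat more explicit than the paper about well-definedness and the residual ambiguity $h_B\sim h_B+\nu_\alpha\chi^\alpha$ (and the constant from proposition \ref{pr:uptoaconst}), which is a welcome clarification rather than a deviation.
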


\subsection{4D BF Theory}
\label{sec:bf-theory}

In this section, we will study BF theory in 4 dimensions with a
cosmological term \cite{Horowitz1989, Cattaneo,
  Baez1996}. The bulk term of the hamiltonian action can be written
as: 
\begin{eqnarray}
S[A^a_i, B^a_{ij}, A^a_0, B^a_{0i}] & = & \int dt\int_\Sigma d^3x \, \left\{ B^a_{ij} \epsilon^{ijk} g_{ab} \d_t A^b_k -  A^a_0 \Phi_a - B^a_{0i} \Psi^i \right\},\\
\Phi_a & = & -  g_{ab}\epsilon^{ijk} D_i B^b_{jk},\\
\Psi^i_a & = & -  g_{ab}\epsilon^{ijk} \left(F^b_{jk} + \frac{\Lambda}{6}B^b_{jk}\right). 
\end{eqnarray}
All fields are valued in the algebra $\mathfrak g$ and we use the same
convension as in section \ref{sec:3d-chern-simons}.
The Poisson bracket is given by:
\begin{equation}
\left\{ I, J \right\} = \frac{1}{2}\int_\Sigma d^3x \, \left( \frac{\delta I}{\delta A^a_i}g^{ab} \epsilon_{ijk}\frac{\delta J}{\delta B^b_{jk}} - I \leftrightarrow J \right).
\end{equation}
The constraints are first-class and satisfy the following algebra:
\begin{eqnarray}
\left\{ \Phi_a(x), \Phi_b(y)\right\} &= & - f^c_{ab} \Phi_c \delta^3(x-y), \\
\left\{ \Phi_a(x), \Psi_b^i(y)\right\} &= & - f^c_{ab} \Psi_c^i \delta^3(x-y), \\
\left\{ \Psi_a^i(x), \Psi_b^j(y)\right\} & = & 0.
\end{eqnarray}
We have $3N$ canonical pairs and $4N$ first-class constraints. The naive counting leads to $-N$ local degree of freedom. However, locally, the constraints are not independent:
\begin{equation}
D_i \Psi^i_a = \frac{\Lambda}{6}\Phi_a.
\end{equation}
There are only $3N$ locally independent constraints and, as expected, zero local degrees of freedom. On the boundary theory however, the story will be different.

The setup is very similar to the 3D Chern-Simons theory described in
the previous sections and we don't need any boundary conditions on the canonical variables to make the action well-defined. The only boundary conditions that we need are on the Lagrange multipliers and will give the Hamiltonian of the reduced theory. Let's first compute the reduced phase-space. As before, we will focus on one boundary and ignore possible topological obstructions. 

\vspace{3mm}

The smeared constraints will be denoted
\begin{equation}
\bar \Gamma_{\epsilon, \eta} = \int_\Sigma d^3x \, \left( \epsilon^a \Phi_a + \eta^a_i \Psi^i_a \right).
\end{equation}
They are associated to the following gauge-like transformations:
\begin{eqnarray}
\delta_{\epsilon, \eta} A^a_i & = & D_i \epsilon^a - \frac{\Lambda}{6}\eta_i^a, \\
\delta_{\epsilon, \eta} B^a_{ij} & = & D_i \eta^a_j - D_j \eta^a_i + f^a_{bc}B^b_{ij} \epsilon^c.
\end{eqnarray}
The boundary term in the variation of $\bar \Gamma_{\epsilon, \eta}$ is
\begin{gather}
\oint_{\d \Sigma} I^n (\bar \Gamma_{\epsilon, \eta}) =- \oint_{\d
  \Sigma} (d^2x)_i \, \epsilon^{ijk} g_{ab} \left( \epsilon^a \delta
  B^b_{jk} - 2 \eta^a_j \delta A^b_k \right),\\
(d^2x)_i = \frac{1}{2}\epsilon_{ijk}dx^jdx^k.
\end{gather}
As before, let's use coordinates adapted to the boundary $x^i = (r,
x^A)$ with the boundary under consideration given by $r$ constant. In
that case, the gauge-like transformations with $\epsilon^a=0$ and
$\eta^a_A=0$ on the boundary are proper gauge transformations. On the
constraints surface, the
finite gauge transformations are generated by the two following
transformations:
\begin{equation}
\label{eq:BFfiniteI}
\left\{ \begin{array}{rcl}
A'_i &=& h^{-1} A_i h + h^{-1}\d_i h,\\
B'_{ij} & = & h^{-1} B_{ij} h,
  \end{array}\right.  \qquad h \in \mathfrak G,
\end{equation}
and
\begin{equation}
\label{eq:BFfiniteII}
\left\{ \begin{array}{rcl}
A'_i &=& A_i -\frac{\Lambda}{6} \eta_i,\\
B'_{ij} & = & B_{ij} +D_i \eta_j - D_j \eta_i - \frac{\Lambda}{6}
[\eta_i, \eta_j],
  \end{array}\right. \qquad \eta_i \in \mathfrak g,
\end{equation}
where $B_{ij}=B^a_{ij}T_a$. The finite proper gauge transformations are those
generated by transformations with $h$ equals to the identity and $\eta_i$ equals to zero on the
boundary. Using a proper transformation of the form \eqref{eq:BFfiniteI}, we can
put $A_r=0$ in a neighborhood of the boundary (see section
\ref{sec:3d-chern-simons}). We can then use a transformation of the
form \eqref{eq:BFfiniteII}, with $\eta_r=0$ and $\eta_A$ solution to
\begin{equation}
\d_r \eta_A = -B_{rA}, \qquad \eta_A\vert_{\d \Sigma}=0,
\end{equation}
to also put $B_{rA}=0$ in a neighborhood of the boundary.
This fixes the gauge close to the boundary and the
reduce-phase space is then completely parametrized by the boundary
value of $A^a_A$ and $B^a\equiv\epsilon^{AB}B^a_{AB}$ with $\epsilon^{AB} = \epsilon^{rAB}$. However, they are not independent:  
\begin{equation}
\Psi^r_a = g_{ab}\left(\frac{\Lambda}{6}B^b + \epsilon^{AB}F^b_{AB}\right)\approx 0.
\end{equation}
The 4 sets of constraints
are dependent in the bulk but we see that on the boundary it is not
the case: imposing $\Phi_a\approx 0$ and $\Psi_a^A\approx 0$ imply $D_r
\Psi_a^r\approx 0$ but we still need to impose $\Psi_a^r\approx 0$ on
the boundary for it to be valid everywhere. 

The Dirac bracket is easily computed using the differentiable functionals $\Gamma_{F}$
\begin{gather}
F[A^a_A\vert_{\d \Sigma}, B^a\vert_{\d \Sigma}] = \oint_{\d \Sigma} d^2x \, f(A^a_A\vert_{\d \Sigma}, B^a\vert_{\d \Sigma}),\\
\Gamma_{F} = \Gamma_{\epsilon_F, \eta_F} + F[A^a_A\vert_{\d \Sigma}, B^a\vert_{\d \Sigma}], \\
\eta^a_{FA}\vert_{\d \Sigma} = - \epsilon_{AB} \frac{g^{ab}}{2}
\frac{\bar \delta f}{\delta A^b_B},\qquad 
\epsilon^a_{F}\vert_{\d \Sigma} = g^{ab} \frac{\bar \delta f}{\delta B^b}, 
\end{gather}
where the Euler-Lagrange
derivatives $\frac{\bar \delta}{\delta}$ are defined on the
boundary coordinates only. For two arbitrary functionals
$I[A^a_A\vert_{\d \Sigma}, B^a\vert_{\d \Sigma}]$ and
$J[A^a_A\vert_{\d \Sigma}, B^a\vert_{\d \Sigma}]$, a direct computation leads to:
\begin{eqnarray}
\left\{ I, J\right\}^* & \approx & \left \{ \Gamma_{I}, \Gamma_{J}\right\}\nonumber\\
 & \approx & \oint_{\d \Sigma} d^2x \,g_{ab}
 \,\left(-B^af^b_{cd}\epsilon^c_I \epsilon^d_J-2  \epsilon^{AB}D_A
   \epsilon^a_I  \eta^b_{JB}\right. \nonumber \\ && \qquad \qquad \qquad \left. +2  \epsilon^{AB}D_A
   \epsilon^a_J  \eta^b_{IB}+
  \epsilon^{AB}\frac{\Lambda}{3} \eta^a_{IA} \eta^b_{JB}\right)\nonumber\\
 & \approx & \oint_{\d \Sigma} d^2x \, \left( -B^a g_{ab}
   f^{bcd}\frac{\bar \delta I}{ \delta B^c}\frac{\bar \delta J}{
     \delta B^d} - g^{ab} D_A \frac{\bar \delta I}{\delta B^a}  \frac{\bar \delta J}{\delta A^b_A} \right. \nonumber \\ && \qquad \qquad \qquad
 \left. +g^{ab} D_A \frac{\bar \delta J}{\delta B^a}  \frac{\bar \delta I}{\delta A^b_A} +\frac{\Lambda}{12} \frac{\bar \delta I}{ \delta
     A^a_A}g^{ab}\epsilon_{AB} \frac{\bar \delta J}{ \delta
     A^b_B}\right). 
\end{eqnarray}
The residual constraints on the boundary
$\Psi^r_a\vert_{\d \Sigma}$ are Casimir functions of the Dirac bracket: 
\begin{equation}
\left\{\Psi^r_a\vert_{\d \Sigma} ,J\right\}^*\approx 0,
\end{equation}
for all functional $J$. 

The Hamiltonian $H_T$ will be the
differentiable gauge generator associated to the lagrange multipliers:
\begin{equation}
H_T = \Gamma_{A_0, B_{0i}}.
\end{equation}
As in the case of Chern-Simons, by
tuning the boundary conditions on the Lagrange multipliers, we can
build $H_T$ to be any functional of the reduced phase-space.

\begin{theorem}
For BF theory in 4D on $\Sigma \times \RR$, the
possible boundary conditions for the fields $A_\mu^a, B^a_{\mu\nu}$ on each connected
component $\cC_n$ of $\d \Sigma$ are in one to one correspondance with
the functionals of $a_{nA}^a$ and $b_{nAB}^A$ defined on the circle
$\cC_n$. The fields $a_{nA}^a$ and $b_{nAB}^a$ are respectively the
pullback of $A_i^a$ and $B_{ij}^a$ on $\cC_n$ and satisfy the
constraint implied by the pullback of $\epsilon_{ijk}\Psi^k_a \approx 0$.
\end{theorem}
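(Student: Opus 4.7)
The plan is to apply the general classification from Section \ref{sec:poss-bound-cond}, which reduces the problem to identifying differentiable gauge generators modulo proper gauge transformations, and then carry out for BF theory the same explicit construction that produced the boundary functional $f_\eta$ in the Chern-Simons analysis of Section \ref{sec:3d-chern-simons}. Since the boundary components $\cC_n$ are disjoint and the boundary term in $\delta\bar\Gamma_{\epsilon,\eta}$ decomposes as a sum over them, I can work on a single component at a time and assemble the result at the end.

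First I would fix adapted coordinates $(r,x^A)$ with $\cC_n$ at constant $r$ and read off the boundary term in $\delta\bar\Gamma_{\epsilon,\eta}$ already computed in the excerpt. Restricted to $\cC_n$, only the pullbacks $a^a_A = A^a_A|_{\cC_n}$ and $b^a_{AB} = B^a_{AB}|_{\cC_n}$ appear in the variations. Writing $b^a \equiv \epsilon^{AB}b^a_{AB}$, differentiability requires the existence of a local $(2,0)$-form $k_{\epsilon,\eta}$ on $\cC_n$ such that
\begin{equation}
\delta\oint_{\cC_n} k_{\epsilon,\eta} = \oint_{\cC_n} d^2x\, g_{ab}\left(\epsilon^a \delta b^b + 2\epsilon^{AB}\eta^a_A \delta a^b_B\right),
\end{equation}
and the standard homotopy argument on the closed surface $\cC_n$ shows this is possible exactly when there exists a local functional $f[a^a_A,b^a]$ with
\begin{equation}
\epsilon^a|_{\cC_n} = g^{ab}\frac{\bar\delta f}{\delta b^b},\qquad \eta^a_A|_{\cC_n} = -\epsilon_{AB}\frac{g^{ab}}{2}\frac{\bar\delta f}{\delta a^b_B},
\end{equation}
matching the relations stated in the excerpt. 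This gives a well-defined map from functionals $f$ on $\cC_n$ to differentiable generators $\Gamma_F$ with $\Gamma_F \approx F=\oint_{\cC_n}f$.

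Next I would handle the quotient by proper gauge transformations. A proper gauge transformation has $\epsilon^a|_{\cC_n}=0$ and $\eta^a_A|_{\cC_n}=0$, which by the formulas above corresponds to $f$ with vanishing Euler-Lagrange derivatives on $\cC_n$, that is, a constant. Therefore functionals modulo constants are in bijection with equivalence classes of differentiable gauge generators modulo proper ones, and by the classification theorem of Section \ref{sec:poss-bound-cond} these are in bijection with the allowed boundary conditions on $(A^a_0, B^a_{0i})$ — the dictionary being $H_T = \Gamma_F$ for the chosen $f$, which then fixes the boundary values of the Lagrange multipliers through the formulas above with $\epsilon^a = A^a_0$ and $\eta^a_A = B^a_{0A}$.

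Finally, I would identify the residual constraint. Since among the boundary conditions we always impose the bulk constraints and their derivatives, the pullback of $\epsilon_{ijk}\Psi^k_a$ to $\cC_n$ yields $g_{ab}(\tfrac{\Lambda}{6}b^b + \epsilon^{AB}F^b_{AB}|_{\cC_n})\approx 0$, a local relation between $a^a_A$ and $b^a$; the functionals $f$ must therefore be functionals on the subspace cut out by this relation (equivalently, two functionals agreeing modulo this constraint determine the same boundary condition). The main technical point, and the place I expect most care is needed, is the homotopy step for the boundary $(n-1,0)$-form: one must verify that the apparent obstructions to writing the boundary $1$-form as $\delta$ of something, beyond those already absorbed into $f$, vanish once the pullback constraint is used, so that the only ambiguity is truly by a constant; this is the analogue of what was checked implicitly in the $3$D Chern-Simons case through \eqref{eq:CSboundcondgaugegen}, but here the presence of two independent field species and the relation $D_i\Psi^i_a = \tfrac{\Lambda}{6}\Phi_a$ require tracking the dependencies carefully.
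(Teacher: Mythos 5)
Your proposal follows essentially the same route as the paper: compute the boundary term of $\delta\bar\Gamma_{\epsilon,\eta}$ in adapted coordinates, characterize the differentiable generators by requiring the boundary values of $(\epsilon^a,\eta^a_A)$ to be Euler--Lagrange derivatives of a single boundary functional $f$, invoke the classification of section \ref{sec:poss-bound-cond} to pass to boundary conditions on the Lagrange multipliers via $H_T=\Gamma_{A_0,B_{0i}}$, and keep the residual pullback constraint $\Psi^r_a\vert_{\d\Sigma}\approx 0$ as the relation between $a^a_A$ and $b^a$. The only blemish is a sign inconsistency between your intermediate boundary one-form (the $\eta$-term should enter with a minus sign, as in the paper's $-2\eta^a_j\delta A^b_k$) and the final relation $\eta^a_A\vert_{\cC_n}=-\epsilon_{AB}\tfrac{g^{ab}}{2}\tfrac{\bar\delta f}{\delta a^b_B}$, which you nonetheless state correctly.
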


\vspace{5mm}

If $\Lambda$ is different than zero, we can solve $\Psi^r_a\vert_{\d
  \Sigma}\approx0$ exactly with 
\begin{equation}
B^b\vert_{\d \Sigma} =-\frac {6}{\Lambda} \epsilon^{AB}F^b_{AB}\vert_{\d \Sigma},
\end{equation}
and describe the reduced phase-space in
term of $A^a_A\vert_{\d\Sigma}$. The Dirac bracket becomes:
\begin{equation}
\left\{ I[A^a_A\vert_{\d\Sigma}], J[A^a_A\vert_{\d\Sigma}]\right\}^*
\approx \frac{\Lambda}{12} \oint_{\d \Sigma} d^2x \, \frac{\bar
    \delta I}{ \delta 
     A^a_A}g^{ab}\epsilon_{AB} \frac{\bar \delta J}{ \delta
     A^b_B}. 
\end{equation}
This is exactly the Poisson bracket \eqref{eq:CSPoisBrac} of the
Chern-Simons theory in 3 dimensions. In this case, the reduced phase-space of the 4D
B-F theory is the phase-space of Chern-Simons theory in 3
dimensions. However, the Hamiltonians will in general be different.

If we can build any Hamiltonian, in principle we should be able to
reproduce the one of Chern-Simons. The easiest way of constructing it
is to add a boundary condition on the canonical variable $B^a_{AB}$:
$B^a_{AB}\vert_{\d \Sigma} =0$. Following the arguments of the
previous section, this can be done by relaxing the boundary condition
on the corresponding Lagrange multiplier: $A^a_0$. If we put the other
two relevant Lagrange multipliers, $B^a_{0A}$, to zero on the
boundary, the associated differentiable total Hamiltonian is given by:
\begin{gather}
\left.B^a_{0A}\right\vert_{\d \Sigma} =0, \\
H_T = \int_\Sigma d^3x \, \left(A^a_0 \Phi_a + B^a_{0i}
  \Psi_a^i\right) + \oint_{\d \Sigma} d^2x \,  g_{ab} \epsilon^{AB}
A^a_0 B^b_{AB}. 
\end{gather}
On the constraint surface $\Phi_a \approx 0$ and $\Psi^i_a \approx 0$, we obtain:
\begin{equation}
H_T \approx  \frac{-6}{\Lambda} \oint_{\d \Sigma}d^2x \,  A^a_0 \, g_{ab} \epsilon^{AB} F^b_{AB},
\end{equation}
which is the Hamiltonian of the 3D Chern-Simons theory \eqref{eq:actionCS}. 


\newpage

\section{Conclusions}
\label{sec:conclusions}

Starting with the notion of symplectic structure and requiring field
theories to behave like discrete mechanical system naturally introduces
the notion of differentiable generators. Restricting the set of
functionals to the subset of differentiable one is then mandatory for the
definition of a Poisson bracket. With these definitions, the
hamiltonian structure of field theories behaves exactly like the one
of discrete mechanical systems. 

In the context of gauge theories, we showed that boundary conditions split into two categories: the boundary
conditions on the dynamical variables are part of the definition of the
canonical structure whereas the boundary conditions on the lagrange
multipliers are part of the choice of the Hamiltonian of the
theory.  The canonical structure leads to the definition of
differentiable gauge transformations as the gauge-like transformations
generated by differentiable functionals. We gave a complete classification of the possible boundary
conditions on the lagrange multipliers in term of these differentiable gauge
transformations. In theories with local degrees of freedom, we need
boundary conditions on the dynamical variables in order to control the
flux of radiation. This restricts the set of differentiable
generators a lot and consequently the set of possible boundary
conditions on the lagrange multipliers. We also showed how the restriction to the differentiable gauge
transformations preserving the boundary conditions
on the lagrange multipliers leads to the usual notion of surface charges.

\vspace{5mm}

In theories with no
local degrees of freedom, one can often remove the boundary conditions
on the dynamical variable which makes the set of differentiable gauge
transformations a lot bigger. This leads to two interesting consequences.
Firstly, using differentiable gauge generators, we can probe the reduced
phase-space and compute the Dirac bracket without solving the
constraints. Secondly, by tuning the
boundary conditions on the lagrange multipliers, we can construct any
hamiltonian on the reduced phase-space. We used 3D Chern-Simons and 4D
BF theory as example. In particular, we derived the complete
set of possible boundary conditions for these theories when defined on
manifold with time-like boundaries located at a finite distance.

\vspace{5mm}

The reduced phase-space of topological theories like Chern-Simons
contains boundary gauge degrees of
freedom. However, they
are not a feature of topological theories and we expect them to exist
in any gauge theory. In this paper, we used the differentiable
generator of gauge transformations to describe them. Unfortunately,
this technique is not generalizable to theories with local degrees of
freedom like gravity in 4 dimensions. As we saw, the problem comes from
the necessity of boundary conditions on the dynamical variables. In the future, it would be interesting to generalize the notion of
canonical structure presented in this paper in order to relax these
kind of boundary conditions. The hope would then be that the boundary
gauge degrees of freedom would again be described by gauge
generators. This would give some new insight on the notion of
holography for more general cases.

\section*{Acknowledgements}
\label{sec:acknowledgements}


I would like to thank G.~Barnich, F.~Canfora, G.~Comp\`ere,
H.~Gonz\'alez, B.~Oblak, A.~Perez, P.~Ritter, D.~Tempo,
R.~Troncoso and J.~Zanelli for useful discussions. This work is
founded by the fundecyt postdoctoral grant 3140125. The Centro de Estudios Cient\'ificos
(CECs) is funded by the Chilean Government through the Centers of
Excellence Base Financing Program of Conicyt.


\newpage
\appendix

\section{The Phase-Space}
\label{sec:phase-space}

We will consider the space manifold $\Sigma$ to be of dimension $n$ and
described by coordinates $x^i$. Its exterior derivative will be
denoted $d$ and will be treated as a Grassmann odd quantity: $dx^i
dx^j = - dx^j dx^i$.

The dynamical fields of the theory will be denoted $z^A$. The
phase-space of the theory is the set of allowed configurations:
\begin{equation}
\cF = \left\{ z^A(x), x^i \in \Sigma; \chi^\mu(z)\vert_{\d \Sigma} = 0\right\}.
\end{equation}
The conditions $\chi^\mu\vert_{\d \Sigma} = 0$ are the set of boundary
conditions imposed on the fields; this set may be empty. If the
boundary of $\Sigma$ is at infinity, boundary conditions are replaced
by asymptotic conditions. We will assume that the boundary conditions
are imposed on all equalities.

We will now describe the differential structure of $\cF$. We will
start by ignoring the boundary conditions and then describe the
implications they have on the general structure.

\subsection{Differential Structure}
\label{sec:diff-struct}

The exterior differential associated to the infinite dimensional
manifold $\cF$ will be denoted $\delta$. We will also treat it as a
Grassmann odd quantity, $\delta z^A \delta z^B = -\delta z^B \delta
z^A$, $\delta x^i=0$, and assume that it anti-commutes with the base
manifold differential: $\{d, \delta\}=0$. A general form will have
components in both directions. A $(p,q)$-form will be a $p$-form over
$\Sigma$ and a $q$-form over $\cF$.

A vector field $Q^A \frac{\d}{\d z^A}$ over $\cF$ is called an
evolutionary vector field with caracteristic $Q^A$. It represents a variation of the
fields by an amount $Q^A(z, x)$. The operator measuring this variation
is  Grassmann even and denoted $\delta_Q$. It satisfies $[\delta_Q,
\delta] = 0$, $[\delta_Q, d]=0$ and $[\delta_q, \d_i]=0$ where $\d_i$
is the total derivative with respect to $x^i$. The algebra of
evolutionary vector fields is given by:
\begin{equation}
[\delta_{Q_1}, \delta_{Q_2}] = \delta_{[Q_1, Q_2]} \qquad \text{with}
\qquad [Q_1, Q_2]^A = \delta_{Q_1}Q^A_2 - \delta_{Q_2}Q^A_1.
\end{equation}
The interior product $\iota_Q$ between an evolutionary vector field $Q^A
\frac{\d}{\d z^A}$ and a general $(p, q)$-form $\theta^{p,q}$ is given
by:
\begin{multline}
\iota_Q \theta^{p,q}(\delta z^{A_1}, ..., \delta z^{A_q}, dx^{i_1},
..., dx^{i_p}) =\\ \sum_{k=1}^q (-)^k\theta^{p,q}(\delta z^{A_1}, ..., \delta_Q z^{A_k}, ..., \delta z^{A_q}, dx^{i_1},
..., dx^{i_p}).
\end{multline}
We have
\begin{equation}
\iota_Q \delta + \delta \iota_Q = \delta_Q, \quad \iota_{Q_1}
\delta_{Q_2} +\delta_{Q_2} \iota_{Q_1} = \iota_{[Q_1, Q_2]}.
\end{equation}

In our analysis, we will work with functionals and their differentials
under $\delta$. A functional $F$ is defined as the integral of a
$(n,0)$-form:
\begin{equation}
F[z] = \int_\Sigma f d^nx.
\end{equation}
We will use lowercase letters for the integrant and uppercase letters
for integrated quantities. If the integrant is a $(n,s)$-form
$\theta^{n,s}$, the resulting integrated quantity $\Theta^s$ will be a
functional $s$-form on the space of configurations $\cF$:
\begin{equation}
\Theta^s[z] = \int_\Sigma \theta^{n,s}.
\end{equation}

When $\delta$ acts on a functional $F$, we obtain:
\begin{equation}
\delta F[z] = \int_\Sigma \delta f d^nx = \int_\Sigma \delta z^A
\frac{\delta f}{\delta z^A} d^nx + \oint_{\d \Sigma} I^n(f d^nx),
\end{equation}
where $\frac{\delta}{\delta z^A}$ is the Euler-Lagrange derivative and
$I^n(f d^nx)$ denotes the $(n-1,1)$-form obtained by integration by
parts. In a similar way, we have
\begin{equation}
\delta_Q F[z] = \int_\Sigma Q^A
\frac{\delta f}{\delta z^A} d^nx + \oint_{\d \Sigma} I_Q^n(f d^nx),
\end{equation}
where $I^n_Q(f d^nx)$ is a $(n-1,0)$-form given by
\begin{equation}
\label{eq:homotopQ}
I^n_Q(f d^nx) = \iota_Q I^n(f d^nx).
\end{equation}

\subsection{Boundary Conditions}
\label{sec:boundary-conditions}

In the description of the differential structure we did not take into
account the boundary conditions. They will impose restrictions on both
$\delta$ and the allowed evolutionary vector fields.

The boundary conditions $\chi^\mu\vert_{\d \Sigma} = 0$ are valid for
all allowed field configurations: they must be preserved by
$\delta$. The exterior derivative $\delta$ satisfies
\begin{equation}
\delta \chi^\mu\vert_{\d \Sigma} = 0.
\end{equation}
In a similar way, an allowed evolutionary vector field must transform
allowed configurations into allowed configurations:
\begin{equation}
\delta_Q \chi^\mu\vert_{\d \Sigma} = 0.
\end{equation}

We then have the two following important results:
\begin{theorem}
For any $(p,q)$-form  $\theta^{n,s}$ such that we have
\begin{equation}
\left. \theta^{p,q} \right\vert_{\d \Sigma} = 0
\end{equation}
for all allowed values of $z^A$ and $\delta z^A$, we have 
\begin{gather}
\left. \delta \theta^{p,q} \right\vert_{\d \Sigma} =
0, \quad \left. \iota_Q \theta^{p,q} \right\vert_{\d \Sigma} = 0, \\ \left. \delta_Q \theta^{p,q} \right\vert_{\d \Sigma}= 0,
\end{gather}
for all allowed evolutionary vector fields $Q^A\frac{\d}{\d z^A}$.
\end{theorem}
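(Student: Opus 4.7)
The plan is to read the hypothesis as the statement that $\theta^{p,q}|_{\partial \Sigma}$, viewed as a form on the submanifold of $\cF$ cut out by the boundary conditions $\chi^\mu|_{\partial\Sigma}=0$ and pulled back to $\partial \Sigma$, is the zero form. From there, each of the three conclusions becomes a compatibility statement between restriction to $\partial \Sigma$ and the three operations $\delta$, $\iota_Q$, $\delta_Q$.

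For the first assertion, the key point is that $\delta$ commutes with restriction to $\partial \Sigma$. This is because $\delta x^i = 0$: the exterior derivative on $\cF$ does not move the spacetime point, so it sends the subspace of fields satisfying $\chi^\mu|_{\partial \Sigma}=0$ into itself (as already recorded in the excerpt, $\delta \chi^\mu|_{\partial \Sigma}=0$). Consequently, if $\theta^{p,q}|_{\partial \Sigma}$ vanishes as a functional form on the allowed configurations, then so does $\delta(\theta^{p,q}|_{\partial \Sigma}) = (\delta\theta^{p,q})|_{\partial \Sigma}$.

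For the second assertion, I would use that an allowed evolutionary vector field is, by definition, tangent to the subspace of allowed configurations: $\delta_Q \chi^\mu|_{\partial \Sigma}=0$. The interior product $\iota_Q \theta^{p,q}$ is, up to a sign, $\theta^{p,q}$ evaluated with one of its $\delta z$-slots replaced by the characteristic $Q^A$. Since $Q^A$ at the boundary is an allowed variation, and $\theta^{p,q}|_{\partial \Sigma}$ vanishes on all allowed values of $\delta z^A$, contracting produces zero; hence $\iota_Q \theta^{p,q}|_{\partial \Sigma}=0$.

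The third assertion is then immediate from Cartan's magic formula $\delta_Q = \iota_Q \delta + \delta \iota_Q$: the first term vanishes on $\partial \Sigma$ by combining the first assertion (applied to $\theta^{p,q}$) with the second (applied to $\delta\theta^{p,q}$, which also restricts to zero on the boundary), and the second term vanishes on $\partial\Sigma$ by applying $\delta$ to $\iota_Q\theta^{p,q}|_{\partial\Sigma}=0$ together with the argument of step one. The main subtlety is the first step: one has to be careful that the hypothesis ``$\theta^{p,q}|_{\partial \Sigma}=0$ for all allowed $z^A,\delta z^A$'' really means vanishing as a form on the correct sub-bundle, so that $\delta$ may be applied to the zero function on that sub-bundle; once that interpretation is fixed, the three statements are essentially tautological consequences of the compatibility of $\delta$, $\iota_Q$, $\delta_Q$ with restriction to the boundary.
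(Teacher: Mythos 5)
Your argument is correct, and in fact the paper states this theorem without any proof, so there is nothing to compare against; your route --- interpreting the hypothesis as vanishing of the pullback to the constrained sub-bundle over $\d\Sigma$, using that $\delta$ commutes with that restriction (via $\delta\chi^\mu\vert_{\d\Sigma}=0$), that allowed $Q^A$ are tangent to it (via $\delta_Q\chi^\mu\vert_{\d\Sigma}=0$), and then Cartan's formula $\delta_Q=\iota_Q\delta+\delta\iota_Q$ for the third claim --- is precisely the natural argument the appendix sets up. No gaps.
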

\begin{corollary}
The set of evolutionary vector fields preserving the boundary conditions forms
an algebra.
\end{corollary}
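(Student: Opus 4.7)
The goal is to show that if $Q_1^A\tfrac{\d}{\d z^A}$ and $Q_2^A\tfrac{\d}{\d z^A}$ are two evolutionary vector fields preserving the boundary conditions, i.e.\ $\delta_{Q_1}\chi^\mu\vert_{\d\Sigma}=0$ and $\delta_{Q_2}\chi^\mu\vert_{\d\Sigma}=0$, then their commutator $[Q_1,Q_2]^A\tfrac{\d}{\d z^A}$ does so as well. Closure under linear combinations with constant coefficients is immediate from linearity of $\delta_Q$ in $Q$, so only the Lie bracket needs attention.

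The plan is to apply the algebra of evolutionary vector fields recalled in the appendix, namely
\begin{equation*}
[\delta_{Q_1},\delta_{Q_2}]=\delta_{[Q_1,Q_2]},
\end{equation*}
directly to the boundary functions $\chi^\mu$. This gives
\begin{equation*}
\delta_{[Q_1,Q_2]}\chi^\mu=\delta_{Q_1}\bigl(\delta_{Q_2}\chi^\mu\bigr)-\delta_{Q_2}\bigl(\delta_{Q_1}\chi^\mu\bigr),
\end{equation*}
so that the question reduces to showing that each term on the right vanishes on $\d\Sigma$.

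For this last step I would invoke the preceding theorem of the appendix with $p=q=0$: the quantity $\delta_{Q_2}\chi^\mu$ is a $(0,0)$-form that vanishes on $\d\Sigma$ for all allowed $z^A$ (that is precisely the hypothesis that $Q_2$ preserves the boundary conditions, and it holds for every point of $\cF$), so the theorem ensures $\delta_{Q_1}(\delta_{Q_2}\chi^\mu)\vert_{\d\Sigma}=0$, and symmetrically for the other term. Hence $\delta_{[Q_1,Q_2]}\chi^\mu\vert_{\d\Sigma}=0$, which is exactly the statement that $[Q_1,Q_2]^A\tfrac{\d}{\d z^A}$ preserves the boundary conditions.

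There is no real obstacle in this argument; the only subtle point to be careful about is the applicability of the theorem, which requires $\delta_{Q_2}\chi^\mu$ to vanish on $\d\Sigma$ \emph{for every allowed configuration $z^A$} and not only pointwise for a single one, so that the subsequent variation $\delta_{Q_1}$ still yields zero on $\d\Sigma$. Since the preservation of the boundary conditions by $Q_2$ is precisely a condition on all of $\cF$, this hypothesis is satisfied and the corollary follows.
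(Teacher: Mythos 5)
Your argument is correct and is exactly the route the paper intends: the corollary is stated as an immediate consequence of the preceding theorem, obtained by writing $\delta_{[Q_1,Q_2]}\chi^\mu = \delta_{Q_1}(\delta_{Q_2}\chi^\mu) - \delta_{Q_2}(\delta_{Q_1}\chi^\mu)$ and applying the theorem to the vanishing $(0,0)$-forms $\delta_{Q_i}\chi^\mu$. Your closing remark about why the theorem's hypothesis (vanishing for \emph{all} allowed configurations) is satisfied is precisely the right point to check, and it holds.
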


\bibliography{../../../Docear/_data/14C4C119A9EB7LK4NIEX6AWNRSXHJH2EOCOG/default_files/Physics}

\end{document}